\documentclass{article}
\usepackage{graphicx}
\usepackage{bbold}
\usepackage{slashed}
\usepackage{amsmath}
\usepackage{amssymb}
\usepackage{amsthm}
\usepackage{authblk}
\usepackage{biblatex}
\usepackage{geometry}
\usepackage{hyperref}
\usepackage{algorithm}
\usepackage{algpseudocode}
\usepackage{booktabs}
\usepackage{pdflscape}
\usepackage{appendix}
\usepackage{longtable}
\usepackage{siunitx}
\usepackage{multirow}%
\usepackage{amsfonts}%
\usepackage{amsthm}%
\usepackage{mathrsfs}%
\usepackage{xcolor}%
\usepackage{textcomp}%
\usepackage{manyfoot}%
\usepackage{algorithmicx}%
\usepackage{algpseudocode}%
\usepackage{listings}%
\newtheorem{proposition}{Proposition}

\newtheorem{lemma}{Lemma}

\begin{document}

\title{From Exponential to Polynomial: An Exact Filter for High-Dimensional MSM Models}
\author{D. A. Hameedi\\
Theoretical Particle Physics and Cosmology Group, Department of Physics\\
King's College London, London, WC2R 2LS, UK\\
\texttt{daniyal.hameedi@kcl.ac.uk}}
\maketitle

\begin{abstract}
    In this paper we propose a new formulation of the Bayesian Filter as used in the discrete-time Markov-Switching-Multifractal (MSM) model of volatility based on existing permutation symmetry within the likelihood structure. We show both analytically and empirically that such a formulation leads to a reduction in time complexity from $O(D^k)$ to $O(k^D)$ thereby significantly reducing the computational bottleneck associated with dimensionality. We compare the agreement between the naive and sector filters and find that while there are significant disagreements, the ground-truth recovery of the latter seems to improve on the former.
\end{abstract}

\tableofcontents

\section{Introduction}

Multifractal measures were first introduced by Mandelbrot in the study of turbulent flows in the 1970s \cite{Mandelbrot_1974}. Based on this work, Mandelbrot, Calvet, and Fisher would go on to formulate the Markov-Switching-Multifractal (MSM) model of Asset Returns in the late 90s \cite{mandelbrot1997multifractal}. The latter model provides us with a long memory methodology for computing volatility of financial assets. It has been shown that the MSM model provides much better results for both in and out of sample predictions, when compared to competing frameworks such as the GARCH derived models\cite{WANG20161}. In order to make such predictions and forecasts, the model requires calibration over historical asset price data, whereby the present market state is inferred through the process of bayesian filtering, and an associated logarithmic-likelihood is generated for known model components. One can then accordingly optimize the log-likelihood in order to then find the most suitable set of model parameters.

Despite its empirical success, the MSM framework is severely constrained by the exponential growth of its latent state space. In the binomial specification, exact filtering requires computational complexity $O(2^k)$ in the number of volatility components $k$, which generalizes to $O(D^k)$ in the multinomial case with $D$ possible discrete volatility states. This exponential scaling renders exact inference intractable for moderate to large values of $k > 13$. To address this issue, a range of approximate methods have been proposed, including the Generalized Method of Moments (GMM) \cite{lux2006markov} and sequential Monte Carlo (particle filtering) techniques\cite{LUX202269,calvet2012robustfiltering}. In this paper however, we will propose an exact reduction of this complexity, without need for approximation.

The filtering procedure is comprised of two steps, namely a predict and an update step \cite{caron2019hmm, Dymarski_2011} both of which have naive complexities of $O(D^k)$. In this paper we both, use a factorized representation of the full transition matrix \cite{ghahramani1997factorial, augustyniak2026mdsv}, and use symmetry arguments to develop an exact reformulation of the MSM filtering problem that reduces the leading computational complexity of the whole filtering procedure from $O(D^k)$ to $O(k^D)$, thereby replacing exponential dependence on the number of components with a polynomial scaling in $k$ (for fixed $D$). This reduction is achieved without approximation and therefore preserves the full information content of the original model. Our construction is based on the discrete timestep MSM formulation of Calvet and Fisher \cite{calvet_fisher_2004}. We hope too that the introduction of our reduced filter will allow for higher dimensional models creating more accurate and more holistic forecasts. 

The key idea underlying the reduction is the exploitation of existing permutation symmetry within the product-form likelihood of the model following a similar argument as \cite{kirkpatrick2013optimalstatespacereductionpedigree}. We can use this permutation invariance to construct equivalence class partitions by occupation number and configuration on the state space of the model. This technique is well understood in the mathematical and physical sciences where in the latter the occupation number representation of second quantization is a well understood and time-tested formalism in Quantum Many-Body-Systems \cite{10.1098/rspa.1927.0039, Fock1932KonfigurationsraumUZ, Roy_2024}. For a good review of that field, consider \cite{tafuri2026lecturenotesquantummanybody, tasaki2023introductionsecondquantizationformalism} and the canonical reference text \cite{10.1093/acprof:oso/9780199699322.001.0001}. In the former mathematical case the symmetric reduction of a state-space into a parametrization by its sufficient statistics has been well explored, particularly by Brockett, Mitter and Yau as part of their estimation algebra program \cite{brockett1980remarks, hazewinkel1983nonexistence, kang2024nonexistencefinitedimensionalestimationalgebras}, and also famously Kalman \cite{Kalman1960}. By constructing such equivalence classes, we can reformulate the filtering problem on a reduced state space of occupation number configurations, yielding an exact compression of the inference dynamics.

The remainder of the paper is structured as follows. Section 2 reviews the Markov-Switching Multifractal (MSM) model in the context of hidden Markov models and outlines the standard Bayesian filtering procedure. Section 3 derives the mathematical machinery required for sector filtering. Section 4 identifies the relevant symmetry structure and develops the associated representation-theoretic reduction of the state space, establishing the main theoretical complexity reduction result. Section 5 presents the resulting exact filtering algorithm and evaluates its performance against the standard formulation in tractable regimes with regards to time-complexity, exactness and comparability. Section 6 concludes with a discussion of our results, as well as potential avenues for improvement, investigation, and further applications. To implement the algorithm, and conduct the benchmarking we have developed a publicly available Python Library available here: \url{https://github.com/DAHameedi32/msm_filter}.

Finally, the Author would like to extend his gratitude to Professor Thomas Lux at the University of Kiel for productive discussions regarding this work.

\section{Discrete HMMs and The MSM Model}
Hidden Markov Models (HMMs) provide a natural framework for determining the underlying dynamics of stochastic processes, and are examples of finite learnable stochastic automata \cite{Dymarski_2011}.

We here provide a short review of the theory of Hidden Markov Models, following primarily the reviews laid out in \cite{Dymarski_2011} and \cite{caron2019hmm}, specifically looking at the discrete timestep case.
\subsection{Hidden Markov Models}
Consider the time series $X_{0:T} = (X_0, X_1, \dots, X_T)$, where each $X_i, i \in (0,T)$ is a random process taking a value within the state space $\mathcal{X}$. This time series obeys the Markov property, and is therefore called a Markov chain if for any $t \geq 0$ and $x_{0}, \dots, x_{t+1} \in \mathcal{X}$:
\begin{equation} \label{Markov_Property}
    \mathbb{P}(X_{t+1} = x_{t+1} |X_{t} = x_{t}, \dots, X_0 = x_0 ) = \mathbb{P}(X_{t+1} = x_{t+1} |X_{t} = x_{t}).
\end{equation}
Additionally, we can formulate a transition probability matrix $\textbf{T}$, which governs the time-evolution of the state $X_t$ to $X_{t+1}$. Indeed, we define it component-wise as such for $i, j \in \mathcal{X}$:
\begin{equation}
    T_{ij} = \mathbb{P}(X_{t+1} =j | X_t = i).
\end{equation}

We now associate with each Markov chain, the observation series $Y_{0:T} = (Y_1, \dots, Y_T)$ where $Y_i \in \mathcal{Y}$, the observation space. The observation series in general may be continuous or discrete, and we assume the observation series $(Y_1, \dots, Y_T)$ to be independent conditional on the state-sequence $(X_0, X_1, \dots, X_T)$. We then introduce the conditional probability on discrete elements:
\begin{equation}
    \mathbb{P}(Y_1 = y_1, \dots, Y_T =y_t | X_0 = x_0, \dots, X_T=x_T)= \prod_{t=1}^T \mathbb{P}(Y_t=y_t | X_t = x_t)
\end{equation}
Where we define the emission probability mass function:
\begin{equation}
    g_{x_t}(y_t) = \mathbb{P}(Y_t=y_t | X_t = x_t)
\end{equation}
also known as the likelihood function.
\subsection{Forward Filtering}
We are interested in the conditional probability mass function of the state $X_t$ given the observation data series $y_{1:t}$, which is given by $p(x_t|y_{1:t})$. By Bayes' Rule, this can be found as follows:
\begin{equation}
    p(x_t | y_{1:t}) = \frac{p(x_t, y_{1:t})}{\sum_{x_t' \in \mathcal{X}} p(x_t', y_{1:t})}
\end{equation}
Where $p(x_t, y_{1:t})$ is the joint distribution of the current state, and the historical data, which is then normalized over the joint distribution of all current states with the historical data.
We can then employ the following predict-update steps\cite{caron2019hmm}:
\begin{align}
    &\text{Predict:} \quad p(x_t | y_{1:t-1}) = \sum_{x_{t-1} \in \mathcal{X}}p(x_t|x_{t-1})p(x_{t-1}|y_{1:t-1})\\
    &\text{Update:} \quad p(x_t|y_{1:t}) =\frac{g_{x_t}(y_t) p(x_t|y_{1:t-1})}{\sum_{x_t' \in \mathcal{X}} g_{x'_t}(y_t) p(x'_t |y_{1:t-1})}
\end{align}
Indeed, the prediction step can be reformulated using the Transition Matrix $\textbf{T}$ as a linear map on the prior distribution $p(x_{t-1}|y_{1:t-1})$ using a vectorial representation of $\mathcal{X}$:
\begin{equation}
    p(x_t | y_{1:t-1}) = \textbf{T} \cdot p(x_{t-1}|y_{1:t-1})
\end{equation}

\subsection{The MSM Model}
The Markov-Switching-Multifractal (MSM) model of asset returns proposed by Calvet and Fisher in \cite{CALVET200127}, can be formulated as a discrete time series model, following \cite{calvet_fisher_2004}:
\begin{equation} \label{MSMModel}
    x_t = \mu_t + \sigma \left[\prod_{\ell=1}^{k} M_{\ell,t} \right]^{\frac{1}{2}} \varepsilon_t, \qquad\text{where }  \varepsilon_t \sim \mathcal{N}(0,1) \text{ i.i.d.} \,\, \forall\,\, t.
\end{equation}
Where $x_t$ is the total logarithmic return observed at time $t$, $\mu_t$ is the timestep drift of the model, which is interpreted as the dividend reinvestment, and $\sigma$ is the overall volatility scale. The product structure $\left[\prod_{i=1}^{k} M_{i,t} \right]$ is the product over the integer $k$ number of volatility components. The selection of $k$ is frequently viewed as a model selection problem, and an area of active research \cite{calvet_fisher_2004}. $M_{\ell, t}$ is drawn from a distribution $\mathcal{M}$ satisfying $\mathbb{E}(\mathcal{M}) =1$ for a general Markov chain,
with switching probability (the probability that the $\ell^{th}$ component changes from it's value at $t-1$ at time $t$) is given by $\gamma_\ell$ for a given $\ell \in 1, \dots,k$. In the binomial case where $M_{\ell,t} \in \{m_0, 2-m_0 \}$ for a convention where $m_0 > 1$ is the volatility high state upon which we condition such that: $\mathbb{P}(M_{\ell,t} = m_0) =\mathbb{P}(M_{\ell,t} = 2-m_0) =0.5$. If we generalize to the Multivariate case, we instead consider a spectrum of volatility component states: $\{m_0, m_1, \dots, m_D\}$, where each $m_i \in \mathbb{R}^+, \forall i \in 1, \dots,D$. In the literature it is common to consider the $k$-vector of multipliers at each timestep $M_t = (M_{1,t}, M_{2,t}, \dots, M_{k,t})$.

To update the model at each timestep, the $\ell^{th}$ component will either be left unchanged, or a new value will be drawn from $\mathcal{M}$ with a probability $\gamma_\ell$, called the transition probability. These transition probabilities follow a
geometric progression:
\begin{equation} \label{gamma_spectrum}
  \gamma_\ell = 1 - (1 - \gamma_1)^{b^{\ell -1}},
  \qquad \ell = 1, \ldots, k.
\end{equation}
Where, $\gamma_1 \in (0,1)$ and $b \in (1,\infty)$. We consider that each $M_{\ell,t}$ is interpreted to be a particular volatility shock over a given time period $\tau_\ell\propto 1/\gamma_\ell$ for each $\ell \in 1\dots k$, and that each of these components transition independently of each other component. In order to compute the full spectrum of transition probabilities, we require values of $\gamma_k$ and $b$. A common choice made in the literature is to take $\gamma_k \propto 1/T$ \cite{lux2006markov}, leaving us with the free parameter $b$ for determining the geometric series of transition probabilities.

We collect all the free parameters of the Binomial MSM into a state-vector of three values which will fully specify the model for a given value of $k$:
\begin{equation} \label{state-vector}
    \psi_k = (m_0, \sigma, b) \in \mathbb{R}^3_+.
\end{equation}
Determining this vector is an open problem in the literature, and we will broadly illustrate the approach in \cite{calvet_fisher_2004} of maximum likelihood estimation (MLE), which we detail in the following section.

\subsection{Model Selection via MLE and Bayesian Filtering}

Given a candidate value of $k$, the free parameters $\psi_k = (m_0, \sigma, b)$ are estimated by maximum likelihood, exploiting the fact that the forward filter of Section 1.2 yields the likelihood as a by-product of the recursion. Specifically, define the one-step predictive (normalizing) constant at time $t$:
\begin{equation}
    c_t(\psi_k) := \sum_{x_t' \in \mathcal{X}} g_{x_t'}(y_t)\, p(x_t' | y_{1:t-1}; \psi_k),
\end{equation}
which is precisely the denominator appearing in the Update step. Since $p(y_{1:T}) = \prod_{t=1}^T p(y_t | y_{1:t-1})$ by the chain rule of probability, and each factor is exactly $c_t(\psi_k)$, the log-likelihood of the observed return series under the model is
\begin{equation} \label{loglik}
    \ell(\psi_k) = \log L(\psi_k) = \sum_{t=1}^T \log c_t(\psi_k).
\end{equation}
The maximum likelihood estimator is then
\begin{equation}
    \hat{\psi}_k = \operatorname*{arg\,max}_{\psi_k \in \mathbb{R}^3_+} \ell(\psi_k).
\end{equation}
Because $\ell(\psi_k)$ has no closed form in $\psi_k$ it is only accessible pointwise, via a full forward pass through the data for each candidate value. This is a black-box numerical optimization problem. Following \cite{calvet_fisher_2004}, this is typically handled with a derivative-free search (e.g. Nelder–Mead, or grid search over $(m_0, b)$ followed by concentrating out $\sigma$ in closed form), since $\ell$ is known to be non-concave with a ridge-like structure that makes gradient-based methods unreliable. Model order $k$ itself is then chosen by comparing $\hat{\psi}_k$ across candidate $k$ via a likelihood-ratio, AIC/BIC, or out-of-sample forecasting criterion, since increasing $k$ always weakly increases in-sample likelihood.

\subsubsection{The Cost of the Forward Filter}

The bottleneck in this scheme is not the outer optimization but the inner filter itself. For the general (Multivariate) MSM in Eq.~\eqref{MSMModel}, each of the $k$ components $M_{\ell,t}$ takes one of $D$ values, so the joint state $X_t = (M_{1,t}, \dots, M_{k,t})$ lives in a state space of size
\begin{equation}
    |\mathcal{X}| = D^k.
\end{equation}
A single Predict step therefore requires forming and applying a $D^k \times D^k$ transition matrix, and naively costs $O(D^{2k})$ per timestep, or $O(T D^{2k})$ over the full sample — even before any outer loop over candidate $\psi_k$. This exponential blow-up in $k$ is precisely what limits practical MSM estimation to small $k$ (typically $k \leq 8$–$10$ in the literature, e.g. \cite{CALVET200127, lux2006markov}), despite the model's motivation being a cascade of arbitrarily many volatility components.

The transition structure does have exploitable sparsity: because each component switches independently with probability $\gamma_\ell$ given in Eq.~\eqref{gamma_spectrum}, the full transition kernel factorizes as a sum, over subsets $S \subseteq \{1,\dots,k\}$ of components that switch simultaneously, of Kronecker-structured termsm, thereby reducing the per-step cost from $O(D^{2k})$ to $O(k D^{k+1})$ by avoiding explicit matrix formation\cite{augustyniak2026mdsv,ghahramani1997factorial}. This remains exponential in $k$, however, and is the point of departure for the exact complexity reduction developed in the remainder of this paper: rather than filtering on $\mathcal{X} = \{m_0,\dots,m_D\}^k$ directly, we exploit the invariance of the binomial (or multinomial) MSM under the action of $S_k$ permuting the $k$ components, and show that the filtering recursion descends to the quotient space $\mathcal{X}/S_k$: the space of occupation-number configurations, of size $O(k^D)$ rather than $O(D^k)$, without loss of exactness.

\section{Exact Sector Filtering}
In this section we propose the exact sector filter following from two key properties of the general class of discrete timestep MSM models. The reduction relies on two key properties which hold true for the general class of discrete MSM models:
\begin{proposition}[Independence]
  The components $M_{1,t}, \ldots, M_{k,t}$ are mutually independent at each
  time $t$, both marginally and in their transition dynamics.
\end{proposition}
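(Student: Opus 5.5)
The proposition is essentially a restatement of the model's construction, so the plan is to make the generative mechanism explicit and read independence off it. I would first formalize the dynamics of each component. For each $\ell \in \{1,\dots,k\}$ and each $t \geq 1$, introduce a switching indicator $B_{\ell,t} \sim \mathrm{Bernoulli}(\gamma_\ell)$ and a fresh draw $\tilde M_{\ell,t} \sim \mathcal{M}$. Assume the full family $\{B_{\ell,t}, \tilde M_{\ell,t}, M_{\ell,0}\}_{\ell,t}$ is mutually independent, with $M_{\ell,0} \sim \mathcal{M}$. This is exactly the modelling assumption stated after Eq.~\eqref{gamma_spectrum}, that "each of these components transition independently of each other component". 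Each component then evolves by
\begin{equation*}
M_{\ell,t} = (1-B_{\ell,t})\,M_{\ell,t-1} + B_{\ell,t}\,\tilde M_{\ell,t}.
\end{equation*}
As a result, $M_{\ell,0:t}$ is a measurable function of $\mathcal{F}_\ell := \sigma(M_{\ell,0}, B_{\ell,1:t}, \tilde M_{\ell,1:t})$ alone. Because the $\sigma$-algebras $\mathcal{F}_1,\dots,\mathcal{F}_k$ are generated by disjoint blocks of an independent family, they are independent. The grouping lemma for independent families then gives that the processes $(M_{1,s})_{s\le t},\dots,(M_{k,s})_{s\le t}$ are mutually independent.

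Next I would extract the two claimed forms of independence.

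\emph{Marginal independence.} Take the time-$t$ coordinate of each process. This gives
\begin{equation*}
\mathbb{P}(M_t = (m_{i_1},\dots,m_{i_k})) = \prod_{\ell=1}^k \mathbb{P}(M_{\ell,t} = m_{i_\ell}).
\end{equation*}
I would also note that $\mathcal{M}$ is invariant for each component's chain, so by induction on $t$ every marginal equals $\mathcal{M}$. This is not needed for independence, but it is used later.

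\emph{Transition dynamics.} Conditioning on $M_{t-1}$, each $B_{\ell,t}$ and $\tilde M_{\ell,t}$ is independent of the past and of the other components. This yields the factorized kernel
\begin{equation*}
\mathbb{P}(M_t = m' \mid M_{t-1} = m) = \prod_{\ell=1}^k \Big[(1-\gamma_\ell)\,\mathbb{1}\{m'_\ell = m_\ell\} + \gamma_\ell\,\mathbb{P}_{\mathcal{M}}(m'_\ell)\Big].
\end{equation*}
Equivalently, the full transition matrix is $\mathbf{T} = \bigotimes_{\ell=1}^k T^{(\ell)}$, where $T^{(\ell)} = (1-\gamma_\ell)I_D + \gamma_\ell \mathbf{1}\pi^\top$ and $\pi$ is the probability vector of $\mathcal{M}$. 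This is the Kronecker structure already alluded to in the cost discussion.

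The main obstacle is not technical but definitional. The statement is true only under the model's independence assumption on the switching and redraw mechanisms, so the proof must state that assumption precisely rather than derive it.

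A related subtlety is the word "marginally". The components are \emph{not} independent conditional on the observations $y_{1:t}$, because the likelihood couples them through the product $\prod_\ell M_{\ell,t}$. I would therefore state explicitly that the proposition concerns the prior (latent) law only. This is all the later sector-filter construction needs: the factorized transition kernel, combined with the symmetric product-form likelihood.
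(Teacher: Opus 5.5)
Your proof is correct and follows the same route as the paper, which simply asserts that independence holds by construction of the MSM (each component switches with its own probability $\gamma_\ell$ and redraws from $\mathcal{M}$ independently of the others, citing Calvet--Fisher); your independent family $\{M_{\ell,0}, B_{\ell,t}, \tilde M_{\ell,t}\}$, the grouping lemma, and the resulting product kernel $\prod_\ell[(1-\gamma_\ell)\delta_{m_\ell,m'_\ell}+\gamma_\ell\,\mathbb{P}(\mathcal{M}=m'_\ell)]$ are a rigorous version of that appeal. Your caveat that the components are not independent given $y_{1:t}$ is right, but it undercuts your closing claim that prior independence plus the symmetric likelihood is ``all the later sector-filter construction needs'': building $\tilde q_t(\vec n)$ from component marginals via the Poisson-multinomial generating function assumes the predicted law is a product of its marginals, which is false once an observation has coupled the components, and exact descent of the recursion to $\Sigma/\mathcal{S}_k$ would require the kernel to be lumpable over $\mathcal{S}_k$-orbits, which fails whenever the $\gamma_\ell$ are distinct.
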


 \begin{proposition}[Product Emission] \label{Product_Emission}
  The conditional likelihood $\omega(x_t)_j$ depends on state $j$ only through
  $g(m^j) = \prod_\ell m^j_\ell$, the product of multipliers or some other likelihood structure symmetric under the permutation group $\mathcal{S}_k$.
\end{proposition}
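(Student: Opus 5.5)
The plan is to read the claim directly off the emission density of the MSM model in Eq.~\eqref{MSMModel}. Conditional on the latent state $X_t = j$, the multiplier vector is fixed at $M_t = m^j = (m^j_1,\dots,m^j_k)$. The return is then $x_t = \mu_t + \sigma\,[\prod_\ell m^j_\ell]^{1/2}\varepsilon_t$ with $\varepsilon_t \sim \mathcal{N}(0,1)$. This is a Gaussian with mean $\mu_t$ and variance $\sigma^2 g(m^j)$, where $g(m^j)=\prod_{\ell=1}^k m^j_\ell$. Hence
\begin{equation*}
  \omega(x_t)_j = \frac{1}{\sqrt{2\pi\sigma^2 g(m^j)}}\exp\left(-\frac{(x_t-\mu_t)^2}{2\sigma^2 g(m^j)}\right).
\end{equation*}
This shows that $\omega(x_t)_j = \phi\big(g(m^j)\big)$ for a single function $\phi$. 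That function depends only on the data $x_t$ and the global parameters $(\mu_t,\sigma)$, and not on $j$ except through $g(m^j)$. This gives the first alternative of Proposition~\ref{Product_Emission}.

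Next I would show that this factorization gives the symmetry. Let $\pi\in\mathcal{S}_k$ act on states by permuting components, $(\pi m)_\ell = m_{\pi^{-1}(\ell)}$. Multiplication of positive reals is commutative, so $g(\pi m^j) = g(m^j)$, and therefore $\omega(x_t)_{\pi j} = \omega(x_t)_j$. Consequently the likelihood is constant on $\mathcal{S}_k$-orbits. Each orbit is labelled by the occupation numbers $(n_0,\dots,n_D)$ with $n_i = \#\{\ell : m^j_\ell = m_i\}$, and $g(m^j) = \prod_i m_i^{n_i}$. This is the form used later.

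For the ``or some other likelihood structure'' clause, I would state it as a closure property rather than prove it. Any emission density of the form $\phi(h(m^j))$ with $h$ a symmetric function of the multipliers is handled by the same one-line argument. Examples are non-Gaussian innovations $\varepsilon_t$, or a variance given by a symmetric polynomial. In each case only the symmetry of $h$ is used.

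The main obstacle is not the computation but the scope of the claim. Independence of the components, and the identical marginal law $\mathcal{M}$ for each component, play no role in this statement. The transition rates $\gamma_\ell$ differ across $\ell$, so the full filter is \emph{not} $\mathcal{S}_k$-invariant. I would therefore state explicitly that Proposition~\ref{Product_Emission} asserts invariance of the emission term only, and must be kept separate from any claim about the predict step. I would also verify that the ``multivariate'' multiplier set $\{m_0,\dots,m_D\}$ leaves the argument unchanged, since it only changes the range of each $m^j_\ell$.
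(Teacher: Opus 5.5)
Your proposal is correct and follows the paper's proof. You condition on the hidden state, note that $x_t$ is then an affine image of $\varepsilon_t$ and hence $\mathcal{N}(\mu_t,\sigma^2 g(m^j))$, read off that the density depends on $j$ only through $g(m^j)$, and conclude $\mathcal{S}_k$-invariance from commutativity of multiplication; your additions (the occupation-number form $\prod_i m_i^{n_i}$, the ``other symmetric structure'' clause treated as a closure property, and the apt caveat that the heterogeneous $\gamma_\ell$ make only the emission term, not the full filter, $\mathcal{S}_k$-invariant) go slightly beyond the paper without changing the argument.
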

Both of these properties hold to be true for the general class of the discrete MSM models, and are the key conceptual underpinnings of the reduction. The first property is true by construction of the MSM model \cite{calvet_fisher_2004}, and is a fundamental assertion of its dynamics. The second property is simple to prove: 
\begin{proof}
    We consider that for a given known hidden state $j = (m_{j_1}, m_{j_2}, \dots, m_{j_k}) \in \{m_1, m_2, \dots, m_D\}^k$ at time $t$, the only source of randomness in the equation \ref{MSMModel} is from $\varepsilon_t \sim \mathcal{N}(0,1)$, and is a linear transformation of a standard Gaussian, allowing us to use the standard result:
    \begin{equation}
        x_t | (\vec{M}_t = j) =\mu_t +  \sigma \left[\prod_{\ell=1}^{k} m_{j_\ell} \right]^{\frac{1}{2}} \varepsilon_t \sim \mathcal{N}\left(\mu_t, \sigma^2\left[\prod_{\ell=1}^{k}m_{j_\ell} \right]  \right).
    \end{equation}
    A proof of this result can be found in \cite{Soch2025}. 
    The emission pmf is then also given by the Gaussian:
    \begin{equation}
        g_{j}(x_t) =  \frac{1}{\sqrt{2\pi \, \sigma^2\left[\prod_{\ell=1}^{k}m_{j_\ell} \right]}}\exp{\left(-\frac{(x_t - \mu_t)^2}{2(\sigma^2\left[\prod_{\ell=1}^{k}m_{j_\ell} \right])}\right)}
    \end{equation}
    Where we can see that the emission pmf is dependent only on the state $j$ by the product of the volatility component states at time $t$: $\left[\prod_{\ell=1}^{k}m_{j_\ell} \right]$.

    We examine the product structure and note that for elements defined over the real number field, $ m_i \in \mathbb{R}, \quad  \forall i \in 1,\dots,D$, commutation under multiplication is trivial.
    \begin{equation}
        \therefore \forall g \in \mathcal{S}_k, \Rightarrow \left[\prod_{\ell=1}^{k}m_{g \cdot j_\ell} \right] = \left[\prod_{\ell=1}^{k}m_{j_\ell} \right] \square.
    \end{equation}
\end{proof}

Because the likelihood product has now been shown to be $\mathcal{S}_k$ symmetric, we can now define equivalence classes on arrangements of equal likelihood, which allows us to then lift to the space of these equivalence classes, which are indexed by the occupation vector $\vec{n} = (n_1, n_2, \dots, n_D)$. This lifting from the state-space to the occupation space occurs without any loss of underlying information, simply a reclassification of the original state-space into a more tractable and lower dimensional format for the filtering process.

\subsection{Factorized Prediction Step}

The factorization of the MSM transition matrix is well understood in the literature by \cite{ghahramani1997factorial} and also \cite{augustyniak2026mdsv}, but we present it here through the lense of statistical independence as part of further motivating the sector structure decomposition.

The full, unfactorized, transition matrix is of size $D^k \times D^k$, and acts on a $D^k$ state-vector. However, each component evolves independently of every other component, and rather than keeping track of the individual permutations, we can keep track of each of the $k$ states. In the Binomial MSM model, we had the following factorized form of the transition matrix:
\begin{equation}
    A = \bigoplus_{\ell = 1}^k A^{(\ell)}, \quad \text{Where } A^{(\ell)} =
    \begin{pmatrix}
        1 - \gamma_\ell/2 & \gamma_\ell/2 \\
        \gamma_\ell/2 & 1 - \gamma_\ell/2
    \end{pmatrix}.
\end{equation}
This factorized form was arrived at from the following form of the general case:
\begin{equation}
    T_{ij} = \bigoplus_{\ell = 1}^k \left[(1-\gamma_\ell)\delta_{m^{(\ell)}_i,m^{(\ell)}_j} + \gamma_\ell \mathbb{P}\left(\mathcal{M} = m^{(\ell)}_j\right)\right]
\end{equation}
Which we can easily check that the factorized version of the multivariate transition matrix will be as follows:
\begin{equation}
    \mathbf{T} = \bigotimes_{\ell}^k T_\ell, \quad T_\ell = \begin{pmatrix}
(1-\gamma_\ell)\pi_1 + \gamma_\ell & (1-\gamma_\ell)\pi_2 & \cdots & (1-\gamma_\ell)\pi_D \\
(1-\gamma_\ell)\pi_1 & (1-\gamma_\ell)\pi_2 + \gamma_\ell & \cdots & (1-\gamma_\ell)\pi_D \\
\vdots & \vdots & \ddots & \vdots \\
(1-\gamma_\ell)\pi_1 & (1-\gamma_\ell)\pi_2 & \cdots & (1-\gamma_\ell)\pi_D + \gamma_\ell

    \end{pmatrix} 
\end{equation}
Where each $T_n$ is a $D\times D$ matrix. This allows us to perform the prediction step in $O(kD^2)$.
\subsection{The Multinomial Case}
We consider that $\mathcal{M}$ may now possess a spectrum of values: $\{m_1, m_2, \dots, m_D\}$ from which a value is drawn at every time step. For an MSM model with $k$ volatility components this means a $D^k$ sized state space. However, following the permutation symmetry implied by property 2 above, we realize that we can reduce the size of the state space by partitioning it into equivalence classes labelled by the value of their conditional likelihood product: $\omega(x_t)_j$.

While in the Binomial MSM case, we considered the sector probability on a certain number of components being in the high state through the poisson binomial, we now generalize this to the poisson multinomial distribution (PMD). For a review of Poisson Multinomial Distributions see: \cite{Daskalakis_2015,liu2025maximumminimummultivariatepoisson,lin2022poissonmultinomialdistributionapplications}. This generalizes the sector structure from a simple scalar which quantifies the number of components in a high state to a $D-$vector quantity $\Vec{n}$ which tells us the occupation number of each state:
\begin{equation}
    \Vec{n} = (n_1, \dots, n_D)
\end{equation}
The Poisson Multinomial Distribution is then required to keep track of the probabilities for each occupation number:
\begin{equation}
    \mathbb{P}(n_1 = N_1, \dots, n_D = N_D)
\end{equation}
The PMD has the Generating Function:
\begin{equation}
    G(z_1, \dots, z_D) = \prod_{i=1}^k \left( \sum_{j=1}^D p_{ij}z_j \right)
\end{equation}
We consider that the system has a constraint: $\sum_i n_i = k$, which is the requirement that the number of components in the model does not change at any given point in time. Indeed, this constraint means that when $D-1$ component occupations are known, the $D^{th}$ component is completely determined. We can reflect this in the Generating function by setting $z_{D} = 1$, thereby making the $D^{th}$ occupation number redundant. This brings us to the following form of the generating function:
\begin{equation}
    G(z_1, \dots, z_{D-1}) = \prod_{i=1}^k \left( p_{iD} +\sum_{j=1}^{D-1} p_{ij}z_j \right) = \prod_i^{k}p_{iD} + \sum_{\vec{n}}q_{(\vec{n})}\prod_{n_i \in \vec{n}} z_i^{n_i}
\end{equation}
A more intuitive interpretation is that the imposition of the constraint constrains the system to the boundary of a volume within the state space. 

\subsection{The Leave-One-Out Distribution}
We can now formulate (in a similar fashion as with the binomial case) the generating function for the leave-one-out distribution of the PMD:
\begin{equation}
    G_{-\ell}(z_1, \dots, z_{D-1}) = \frac{G(z_1, \dots, z_{D-1})}{p_{\ell D} + \sum_{j}p_{\ell j} z_j} = \prod_{i=1\neq \ell} p_{iD} + \sum_{\vec{n}}Q_{(\vec{n}, \ell)}\prod_{n_i \in \vec{n}} z_i^{n_i}
\end{equation}
We can then use the power series formulation of the generating functions:
\begin{multline}
    G(z_1, \dots, z_{D-1}) = \left(p_{\ell D} + \sum_{j}^{D-1}p_{\ell j}z_j\right)G_{-\ell}(z_1, \dots, z_{D-1})\\
    \Rightarrow \prod_{i=1}^k p_{iD} + \sum_{\vec{n}}q_{(\vec{n})}\prod_{n_i \in \vec{n}}z^{n_i} = \left(p_{\ell D} + \sum_{j}^{D-1}p_{\ell j}z_j\right) \left(\prod_{i=1\neq \ell} p_{iD} + \sum_{\vec{n}}Q_{(\vec{n}, \ell)}\prod_{n_i \in \vec{n}} z_i^{n_i} \right)\\
    \Rightarrow \sum_{\vec{n}}q_{(\vec{n})}\prod_{n_i \in \vec{n}} z_{i}^{n_i} = \sum_j^{D-1} \left(\prod_{i=1\neq \ell}p_{iD}\right)p_{\ell j} z_j + \sum_{(\vec{n} \neq \hat{e}_j)}p_{\ell D} Q_{(\vec{n}, \ell)}\prod_{n_i \in \vec{n}} z_i^{n_i} + \sum_{j,\vec{n}\neq \hat{e}_j}p_{\ell j}Q_{(\vec{n}, \ell)}\prod_{n_i \in \vec{n}}z_i^{n_i}\\
    \Rightarrow \tilde{q}_{\vec{n}} \prod_{n_i \in \vec{n}}z_i^{n_i} = \sum_{j=1}^{D-1}\left(\prod_{i=1\neq\ell}^k\tilde{p}_{iD}\right)\tilde{p}_{\ell j} \delta_{\vec{n}, \hat{e}_j}\prod_{n_i \in \vec{n}}z_i^{n_i} +\tilde{p}_{\ell D} Q_{\vec{n}, \ell} \prod_{n_i \in \vec{n}}z_i^{n_i}+ \sum_{j}^{D-1}\tilde{p}_{\ell j}Q_{(\vec{n} - \hat{e}_j, \ell)}\left( \prod_{n_i \in \vec{n}}z_i^{n_i}\right)
\end{multline}
Where the linear term corresponds to the base case of the recursion, which occurs in the case where all components are in the $D^{th}$ state (at the boundary). We then make some rearrangements to factorize the linear term in $z_j$. We start by considering a particular arrangement $\vec{n} \neq \hat{e}_j$:
\begin{multline}
    \tilde{q}_{\vec{n}} \prod_{n_i \in \vec{n}}z_i^{n_i} = \tilde{p}_{\ell D} Q_{\vec{n}, \ell} \prod_{n_i \in \vec{n}}z_i^{n_i}+ \sum_{j}^{D-1}\tilde{p}_{\ell j}Q_{(\vec{n} - \hat{e}_j, \ell)}\left( \prod_{n_i \in \vec{n}}z_i^{n_i}\right)\\
    \Rightarrow Q_{\vec{n}, \ell} = \frac{1}{\tilde{p}_{\ell D}}\left[ \tilde{q}_{(\vec{n})} - \sum_j^{D-1}\left( \tilde{p}_{\ell j}Q_{(\vec{n}-\hat{e}_j, \ell)} \right) \right]
\end{multline}
Additionally, we can find the base case of the recursion by setting $\vec{n} = \hat{e}_j$, which  will simply yield the linear term, for consistency. This base case corresponds to the arrangement where all components are in the $D^{th}$ state except the $\ell^{th}$ one, which is in the $j^{th}$ state:
\begin{multline}
    Q_{(\hat{e}_j, \ell)} = \frac{1}{\tilde{p}_{\ell D}} \left[ \tilde{q}_{(\hat{e}_j)} - \sum_{j'}^{D-1}\left( \tilde{p}_{\ell j'}Q_{(\vec{0}, \ell)} \right) \right]\\
    \text{To reproduce the linear term: } Q_{(\vec{0}, \ell)} = \prod_{i\neq \ell} p_{iD}, \quad \Rightarrow Q_{(\hat{e}_j, \ell)} = \frac{1}{\tilde{p}_{\ell D}} \left[ \tilde{q}_{(\hat{e}_j)} -  \tilde{p}_{\ell j}\prod_{i \neq \ell}p_{iD} \right] \square.
\end{multline}
Where the sum collapses to a single term where $Q_{(\hat{e}_{j'} - \hat{e}_j, \ell)} = 0$ unless $j' = j$.
\subsection{Marginal Recovery Formula}
We consider the following definition of the Marginal Probability:
\begin{equation}
    p_{\ell j, t} = \mathbb{P}(M_{\ell,t} = m_j | x_0, x_1, \dots, x_t) = \sum_{\vec{n}} \mathbb{P}(M_{\ell,t} = m_j | \vec{n})q_{\vec{n}}
\end{equation}
Where $q_{\vec{n}}$ is the actual sector posterior distribution. We can define the predicted marginal with respect to the actual marginal as follows:
\begin{equation}
    q_{(\vec{n})} = \frac{\omega_{(\vec{n})}(x_t) \tilde{q}_{(\vec{n})}}{\sum_{\vec{n}}\omega_{\vec{n}}(x_t)\tilde{q}_{(\vec{n})}}
\end{equation}
We then decompose the probability $\mathbb{P}(M_{\ell, t} = m_j |\vec{n})$ into the following function:
\begin{equation}
    \mathbb{P}(M_{\ell, t} = m_j | \vec{n}) = \frac{\tilde{p}_{\ell j, t}Q_{(\vec{n}-\hat{e}_j, \ell)}}{\tilde{q}_{(\vec{n})}}
\end{equation}
Which we then plug back into the marginal:
\begin{equation}
    p_{\ell j, t} =\sum_{\vec{n}}\frac{\tilde{p}_{\ell j, t}Q_{(\vec{n}-\hat{e}_j, \ell)}}{\tilde{q}_{(\vec{n})}}\frac{\omega_{(\vec{n})}(x_t) \tilde{q}_{(\vec{n})}}{\sum_{\vec{n}}\omega_{\vec{n}}(x_t)\tilde{q}_{(\vec{n})}}
\end{equation}
Which can then be simplified as follows:
\begin{equation}
    p_{\ell j, t} = \tilde{p}_{\ell j, t}\frac{\sum_{\vec{n}}Q_{(\vec{n}-\hat{e}_j, \ell)}\omega_{(\vec{n})}(x_t)}{\sum_{\vec{n}}\omega_{\vec{n}}(x_t)\tilde{q}_{(\vec{n})}}
\end{equation}
We can then use the predicted sector marginal expression to make the denominator more explicit, and allow it to be found in terms of the recursion:
\begin{equation}
    p_{\ell j, t} = \tilde{p}_{\ell j, t}\frac{\sum_{\vec{n}}Q_{(\vec{n}-\hat{e}_j, \ell)}\omega_{(\vec{n})}(x_t)}{\sum_{\vec{n}}\omega_{\vec{n}}(x_t) \left[ \tilde{p}_{\ell D} Q_{(\vec{n}, \ell)} + \sum_{j}^{D-1}\tilde{p}_{\ell j}Q_{(\vec{n} - \hat{e}_j, \ell)} \right] }
\end{equation}

This marginal recovery formula, in conjunction with the efficient recursions described above allows the Bayesian inference to be computed in polynomial $O(k^D)$ rather than $O(D^k)$ time.

\section{Algorithmic Complexity}

To derive the Algorithmic Complexity we consider that we know a priori that the prediction step will be $O(kD^2)$. For the update step, we consider that one can use representation theoretic methods to present the underlying complexity reduction as a reduction in the dimensionality of the space upon-which the inference update step is carried out, and present this in the proceeding section.
We consider the original state-space $\Sigma = \{m_1, m_2, \dots, m_D \}^k \Rightarrow |\Sigma| = D^k$, which has the conditional likelihood defined on it: $\omega(x_t) = \prod_j^k m_j$. 
We further consider that under index permutation of the elements in the likelihood structure, the likelihood is invariant under the action of the permutation group $\mathcal{S}_k$:
 \begin{equation*}
        \therefore \forall g \in \mathcal{S}_k, \Rightarrow \left[\prod_{\ell=1}^{k}m_{g \cdot j_\ell} \right] = \left[\prod_{\ell=1}^{k}m_{j_\ell} \right].
\end{equation*}
We then group into equivalence classes of equivalent likelihood. This invariance is described by the symmetry group $\mathcal{S}_k$, the permutation group of $k$ elements. 
We can recall that when defining these equivalence classes, we are effectively quotienting the state-space by this finite group, forming the orbit space $\Sigma /\mathcal{S}_k$. 
This quotient space of equivalence classes is the space upon-which the subsequent inference is run, and we posit that the complexity of that inference process will be some function of the dimensionality of the reduced space.

We determine the dimensionality of the orbit space by Burnside's Lemma\cite{serre1977linear}:
\begin{equation} \label{Burnside}
    |\Sigma/\mathcal{S}_k|= \frac{1}{|\mathcal{S}_k|} \sum_{g \in \mathcal{S}_k} |Fix( \Sigma, g)|
\end{equation}
Where $g$ is an element of the group $\mathcal{S}_k$, and we know that $|\mathcal{S}_k|=k!$, as there are $k$ ways to pick the first element, $k-1$ ways to pick the second element, etc.
The function $|Fix(\Sigma,g)|$ denotes the number of elements in $\Sigma$ whose orbits are fixed under the action of $\mathcal{S}_k$, which in this case corresponds to having equivalent likelihoods.
We can use the following identity to make the change from summing over group elements to summing over states $s \in \Sigma$:
\begin{equation}
    \sum_{g \in \mathcal{S}_k} |Fix(\Sigma, g)| = \sum_{s \in \Sigma} |Stab(s)|
\end{equation}
Where the stabilizer $Stab(s)$ is the set of permutations which maps a state $s$ labeled by a given occupation vector $\vec{n}$ to itself. 
We will consider the binomial case computation before further generalizing the computation to the multinomial case.

\paragraph{The Binomial Case:}

In the Binomial case, terms are grouped together on the basis of being in the high $m_0$ state, or low $2-m_0$ state. Each state is therefore represented by a scalar occupation number which counts how many elements are in the high state: $n$.
We can therefore see that there are $k\choose n$ states. In each of these states with $n$ high states, and $n-k$ low states, there are $n!$ permutations which can be made among the high state elements, and $(n-k)!$ permutations among the low state elements
which leave the occupation number $n$ invariant. As such, the stabilizer is given by:
\begin{equation}
    \sum_{s \in \Sigma} |Stab(s)| = \sum_{n=0}^k {k\choose n} n!(k-n)! = \sum_{n=0}^k k! = (k+1)k! = (k+1)!
\end{equation}
plugging this back into \ref{Burnside} we get:
\begin{equation}
    |\Sigma / \mathcal{S}_k| = (k+1)
\end{equation}
An extra factor of $k$ is introduced due to the fact that the leave-one-out distrribution runs over $k$ components leading to a leading term complexity of $\mathcal{O}(k^2)$.

\paragraph{The Multinomial Case:}
In the Multinomial case, terms in the structure are grouped together on the basis of having a particular occupation configuration, as labeled by the occupation vector $\vec{n}$. We recall that the components of this vector are subject to the constraint $\sum_j n_j = k$. We can therefore see that there are ${k \choose n_1, \dots, n_D} = \frac{k!}{n_1!\dots n_D!}$ states. For each occupation vector configuration there are $\prod_j^D n_j!$ permutations among each occupation level. As a result, the summation is then:
\begin{equation}
    \sum_{s \in \Sigma} |Stab(s)| = \sum_{\vec{n},|\vec{n}| = k} \frac{k!}{\prod_j^D n_j!} \prod_j^D n_j! = \sum_{\vec{n},|\vec{n}| = k} k! = k! {k + D-1 \choose D-1}
\end{equation}
Where the combinatorial factor arises from the fact that we are arranging $k$ elements and $D-1$ bins into $D-1$ possible entries in the occupation vector $\vec{n}$.
We can then plug this back into \ref{Burnside} to get the size of the Orbit space:
\begin{equation}
    |\Sigma / \mathcal{S}_k| ={k + D-1 \choose D-1} = \frac{k^{D-1} + \dots+\prod_i^{D-1} i}{(D-1)!}
\end{equation}
As in the binomial case, an extra factor of $k$ is introduced by the leave-one-out distribution running over the $k$ components. This then leads to a leading dimensionality term of $O(k^D)$ under the inference algorithm, a polynomial time complexity.

\section{Implementation and Numerical Results}

In this section, we present the algorithmic formulation and implementation of the Sector Filter, and associated benchmark results. We wish to confirm three main claims for which three separate benchmarks were run. The first claim is that the Sector filter equals the naive filter in the regime where the naive filter is tractable. The second claim is that we wish to confirm the theoretical prediction of the $O(k^D)$ reduced time-complexity. The final claim which we wish to confirm is that the Filter is indeed exact, and yields logarithmic-likelihoods matching the Naive filter to acceptable precision where the latter is tractable.

\subsection{Sector Filter Algorithmic Implementation}

The theoretical construction developed in Sections 3 and 4 yields an exact filtering procedure operating on occupation-number configurations rather than the full state space. The implementation consists of three stages performed at each timestep: (i) prediction of the component marginals, (ii) construction of the predicted sector distribution, and (iii) Bayesian updating using the sector likelihoods.

Let
\[
\mathbf p_t = \{p_{\ell j,t}\}_{\ell=1,\ldots,k}^{j=1,\ldots,D}
\]
denote the collection of component-wise marginal probabilities and let
\[
q_t(\mathbf n)
\]
denote the sector distribution indexed by occupation vectors
\[
\mathbf n = (n_1,\ldots,n_D), \qquad \sum_{j=1}^{D} n_j = k.
\]

Rather than storing all \(D^k\) hidden states, the algorithm stores only the admissible occupation configurations, whose number is
\[
\binom{k+D-1}{D-1}.
\]

For each timestep \(t\), the prediction step first propagates the component marginals according to the factorized transition kernel \cite{augustyniak2026mdsv, ghahramani1997factorial}
\[
\tilde p_{\ell,t} = T_\ell p_{\ell,t-1},
\]
where \(T_\ell\) is the \(D \times D\) transition matrix defined in Eq. (20). Because the transition kernel factorizes over components, all predicted marginals can be computed independently.

The predicted sector distribution \(\tilde q_t(\mathbf n)\) is then constructed from the Poisson-multinomial generating function of Eq. (24). The associated leave-one-out distributions \(Q(\mathbf n,\ell)\) are computed recursively using Eq. (27) together with the boundary condition Eq. (28). These quantities contain all information required to recover the component marginals from the reduced occupation representation.

Given an observation \(x_t\), the likelihood associated with occupation vector \(\mathbf n\) is
\[
\omega_{\mathbf n}(x_t),
\]
which depends only on the occupation counts through the symmetric product structure established in Proposition 2. The sector posterior is obtained via Bayes' rule,
\[
q_t(\mathbf n)
=
\frac{\omega_{\mathbf n}(x_t)\tilde q_t(\mathbf n)}
{\sum_{\mathbf n'}
\omega_{\mathbf n'}(x_t)\tilde q_t(\mathbf n')}.
\]

Finally, the updated component marginals are recovered exactly from the sector posterior using the marginal recovery formula derived in Eq. (34),
\[
\tilde p_{\ell j,t}
=
\tilde p_{\ell j,t}
\frac{
\sum_{\mathbf n}
Q(\mathbf n-\hat e_j,\ell)\,
\omega_{\mathbf n}(x_t)
}{
\sum_{\mathbf n}
\omega_{\mathbf n}(x_t)\tilde q_t(\mathbf n)
}.
\]

The complete filtering procedure is summarized in Algorithm 1.

\textbf{Algorithm 1: Exact Sector Filter}

\begin{enumerate}
\item Initialize component marginals \(p_{\ell j,0}\).
\item For \(t=1,\ldots,T\):
\begin{enumerate}
\item Predict component marginals using Eq. (20).
\item Construct predicted sector probabilities \(\tilde q_t(\mathbf n)\).
\item Compute leave-one-out distributions \(Q(\mathbf n,\ell)\).
\item Evaluate sector likelihoods \(\omega_{\mathbf n}(x_t)\).
\item Update sector posterior \(q_t(\mathbf n)\).
\item Recover component marginals using Eq. (34).
\end{enumerate}
\item Accumulate the normalization constants to obtain the log-likelihood.
\end{enumerate}

Because all operations are performed on occupation-number configurations rather than the original state space, the implementation avoids the exponential \(D^k\) growth of the naive filter. The resulting computational cost is polynomial in the number of volatility components and agrees with the complexity analysis derived in Section 4. A lightweight Python Implementation of the algorithm can be accessed here: \url{https://github.com/DAHameedi32/msm_filter}.

\subsection{Time Complexity Benchmark}

For the Time Complexity benchmark, the Sector Filter was run on $T=2512$ days of GSPC data from the S$\&$P 500, fetched using the yfinance python library, with the underlying mathematical processes being handled by the NumPy library\cite{harris2020array}. Filtering was run against an array of k-values: $k \in \{5, 8, 10, 12, 15, 20, 25, 30, 40\}$ to demonstrate the relation for large k which has never been done attempted in the prior literature for computational reasons \cite{LUX202269}.We ran the filter for a variety of state-space sizes: $D=2,3,4$.
\newpage
\begin{figure}[h!]
    \centering
    \includegraphics[width=0.8\textwidth]{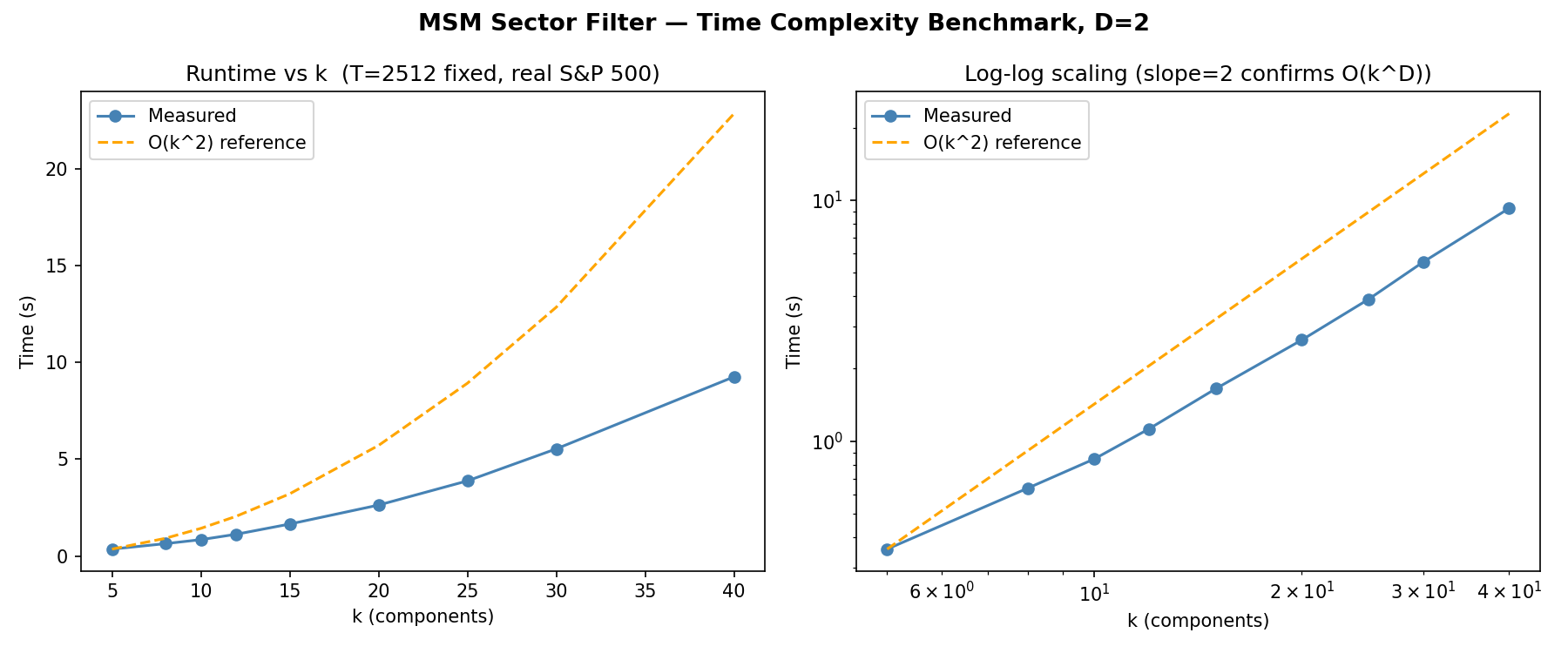}
    \caption{Time-scaling against theoretical prediction for the case $D=2$}
    \label{fig:D=2}
    \end{figure}
\begin{table}[h!]
    \begin{center}
    \begin{tabular}{||p{3cm} p{3cm} p{3cm} p{3cm}||} 
 \hline
 k & Time(s) & Ratio & $Tk^2$ \\ [0.5ex] 
 \hline\hline
     5  &    0.3572  &    1.00   &    62800\\
     8  &    0.6391  &    1.79   &   160768\\
    10  &    0.8454  &    1.32   &   251200\\
    12  &    1.1264  &    1.33   &   361728\\
    15  &    1.6523  &    1.47   &   565200\\
    20  &    2.6323  &    1.59   &  1004800\\
    25  &    3.8834  &    1.48   &  1570000\\
    30  &    5.5377  &    1.43  &   2260800\\
    40  &    9.2598  &    1.67 &    4019200\\ [1ex] 
 \hline
\end{tabular}
\end{center}
\caption{Raw time data for benchmark at $D=2, \{m_1 = 0.7, m_2 =1.3\}$. Ratio column indicates the ratio between that $k$ value's associated time and the previous $k$ value's.}

    \label{table:2}
\end{table}

\begin{figure}[h!]
    \centering
    \includegraphics[width=0.8\textwidth]{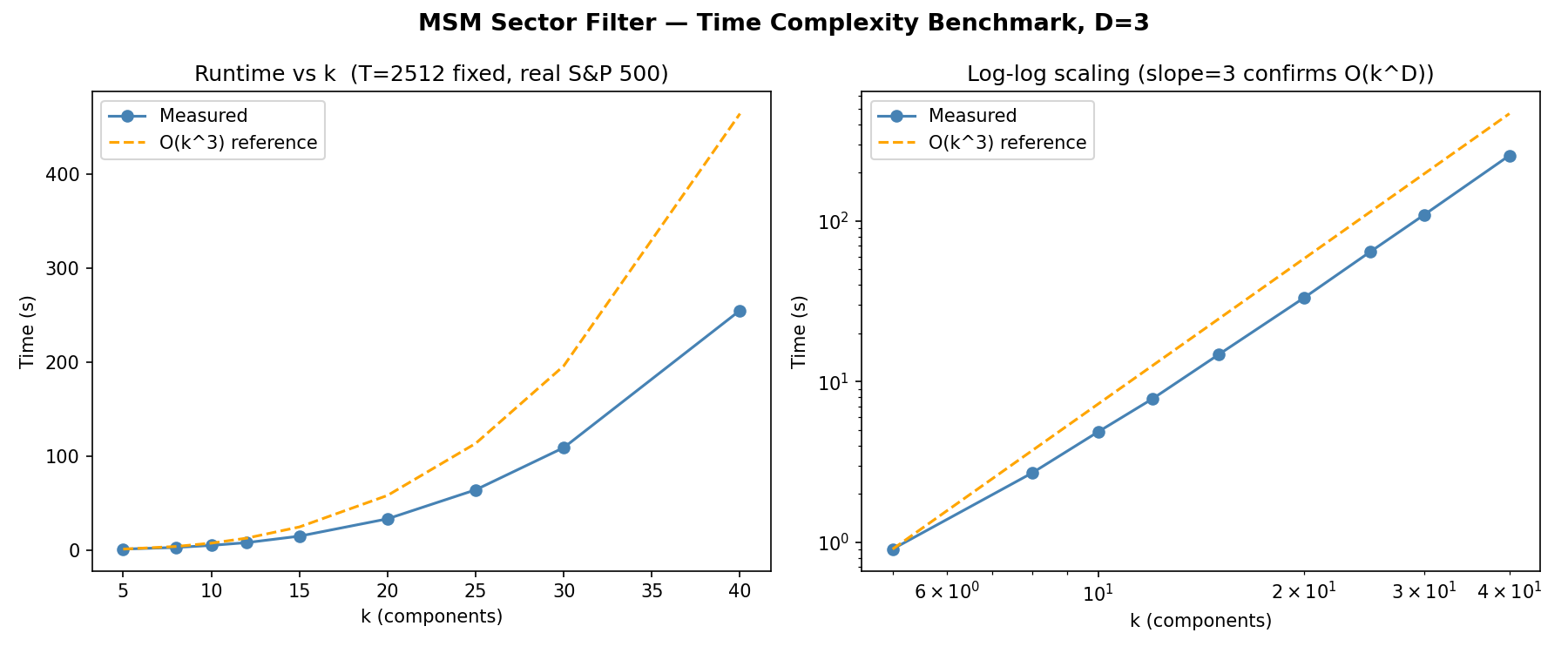}
    \caption{Time-scaling against theoretical prediction for the case $D=3$}
    \label{fig:D=3}
\end{figure}
\begin{table}[h!]
    \begin{center}
    \begin{tabular}{||p{3cm} p{3cm} p{3cm} p{3cm}||} 
 \hline
 k & Time(s) & Ratio & $Tk^3$ \\ [0.5ex] 
 \hline\hline
     5&      0.9073&      1.00&      314000\\
     8&      2.6984&      2.97&     1286144\\
    10&      4.8702&      1.80&     2512000\\
    12&      7.7983&      1.60&     4340736\\
    15&     14.7030&      1.89&     8478000\\
    20&     33.1357&      2.25&    20096000\\
    25&     64.0308&      1.93&    39250000\\
    30&    108.9450&      1.70&    67824000\\
    40&    255.0306&      2.34&   160768000\\[1ex] 
 \hline
\end{tabular}
\end{center}
\caption{Raw time data for benchmark at $D=3,  \{m_1 = 0.7, m_2 =1.0, m_3 =1.3\}$. Ratio column indicates the ratio between that $k$ value's associated time and the previous $k$ value's.}

    \label{table:3}
\end{table}
\begin{figure}[h!]
    \centering
    \includegraphics[width=0.8\textwidth]{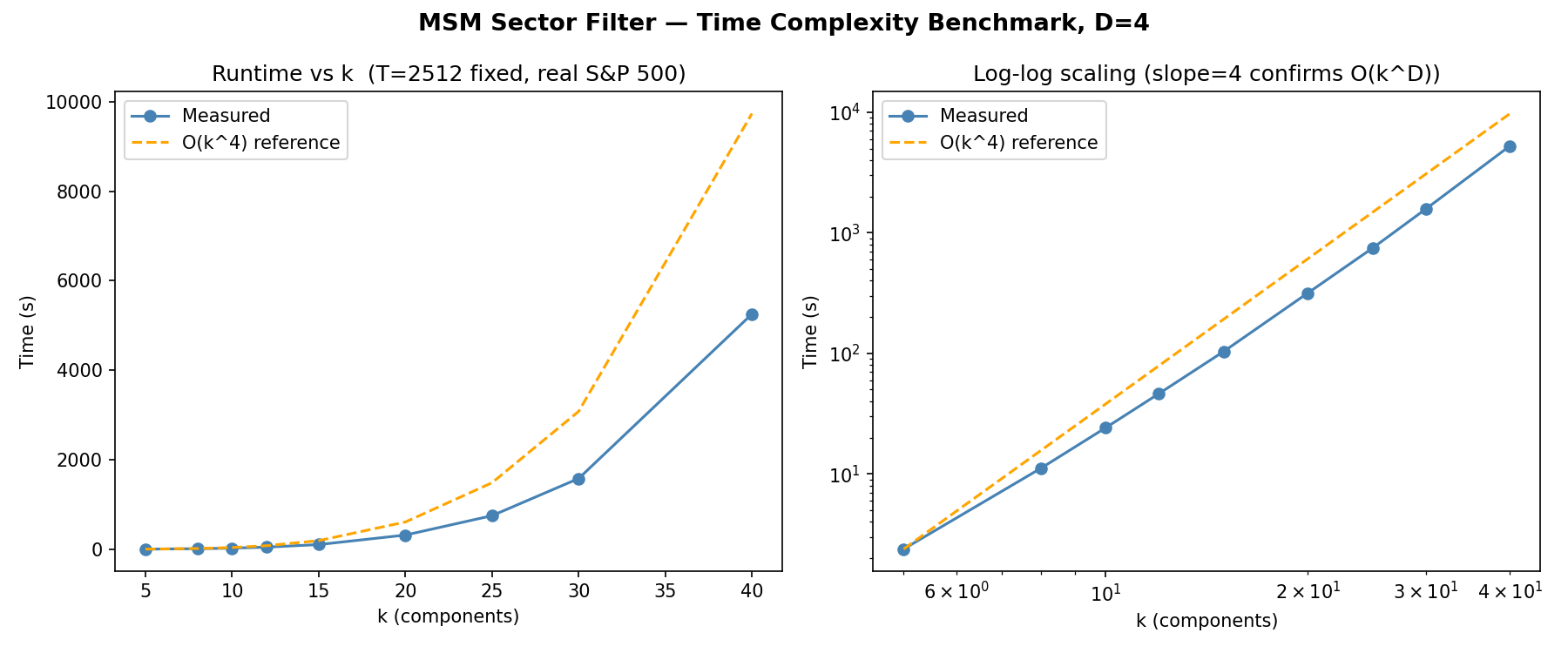}
    \caption{Time-scaling against theoretical prediction for the case $D=4$}
    \label{fig:D=3}
\end{figure}
\begin{table}[h!]
    \begin{center}
    \begin{tabular}{||p{3cm} p{3cm} p{3cm} p{3cm}||} 
 \hline
 k & Time(s) & Ratio & $Tk^4$ \\ [0.5ex] 
 \hline\hline
    5 &     2.3768 &     1.00 &    1570000\\
     8&     11.0946&      4.67&    10289152\\
    10&     24.1323&      2.18&    25120000\\
    12&     46.2650&      1.92&    52088832\\
    15&    103.9030&      2.25&   127170000\\
    20&    316.3284&      3.04&   401920000\\
    25&    749.3960&      2.37&   981250000\\
    30&   1580.3953&      2.11&  2034720000\\
    40&   5255.0459&      3.33&  6430720000\\[1ex] 
 \hline
\end{tabular}
\end{center}
\caption{Raw time data for benchmark at $D=4,  \{m_1 = 0.7, m_2 =0.9, m_3 =1.1, m_4=1.3\}$. Ratio column indicates the ratio between that $k$ value's associated time and the previous $k$ value's.}

    \label{table:4}
\end{table}
\newpage
We can see that compared to the $O(k^D)$ prediction, the filtering algorithm has over-performed at lower values of $D$. This is due to a variety of reasons, the main one being the fact that the NumPy library used to handle the underlying mathematical operations has a number of pre-packaged optimizations, such as running compiled C code, and also uses the BLAS and LAPACK libraries to self-optimize at run-time, and that it selects such libraries in order of performance from most to least\cite{numpy_blas_lapack}.

\subsection{Comparative Benchmark}
\label{sec:comparative-benchmark}
In this section we aim to empirically show that the Sector Filter operates in a numerically comparable manner to the naive filter in the regime where the naive filter is computationally feasible to compute. In this case we run for $N=20$ trials at $k=\{2,3,4,5,6\}$ over a process with $T=2512$ timesteps. This benchmark was run using a synthetic process taking the following set of multiplier values: $m = \{0.7, 0.9,1.0,1.1,1.3\}$, with an arbitrarily chosen non-ergodic initial distribution: $p_m = \{0.1, 0.15, 0.3, 0.25, 0.2\} $, and a pre-determined random seed generator: $\text{SEED}=42$. A summary comparison of the mean, absolute difference in log-likelihoods, as well as Hamming Distance, and MAP agreement of the two filters can be seen in the table below.
\begin{table}[htbp]
\centering
\caption{Exactness Benchmark and Comparison Summary per $k$ (Aggregated over 20 Trials)}
\label{tab:exactness_benchmark}
\resizebox{\textwidth}{!}{%
\begin{tabular}{rcccccccrcc}
\toprule
$k$  & \text{Mean } $|\Delta \text{LL}|$ & \text{Std } $|\Delta \text{LL}|$ & \text{Mean Rel } $\Delta \text{LL}$ & \text{Mean } $L_1$ & \text{Max } $L_1$ & \text{MAP Agree} & \text{Mean Hamming} & \text{Sector Rec.} & \text{Naive Rec.} \\
\midrule
2  & 0.013522  & 0.010815 & 0.000002 & 0.003690 & 0.011360 & 1.00 & 0.00 & 0.500000 & 0.500000 \\
3  & 0.680350  & 0.701281 & 0.000085 & 0.106122 & 0.311523 & 0.80 & 0.25 & 0.383333 & 0.416667 \\
4  & 0.768044  & 0.921561 & 0.000096 & 0.221339 & 0.658529 & 0.45 & 0.80 & 0.325000 & 0.300000 \\
5  & 0.899184  & 0.645930 & 0.000116 & 0.234066 & 0.400787 & 0.30 & 1.05 & 0.360000 & 0.360000 \\
6  & 1.550141  & 0.991777 & 0.000197 & 0.273904 & 0.513606 & 0.25 & 1.25 & 0.258333 & 0.250000 \\
\bottomrule
\end{tabular}%
}
\end{table}
The full raw data for this benchmark can be found in the Appendix A. We can isolate some individual features of the Benchmark in the following graphs:
\begin{figure}[h!]
    \centering
    \includegraphics[width=0.8\textwidth]{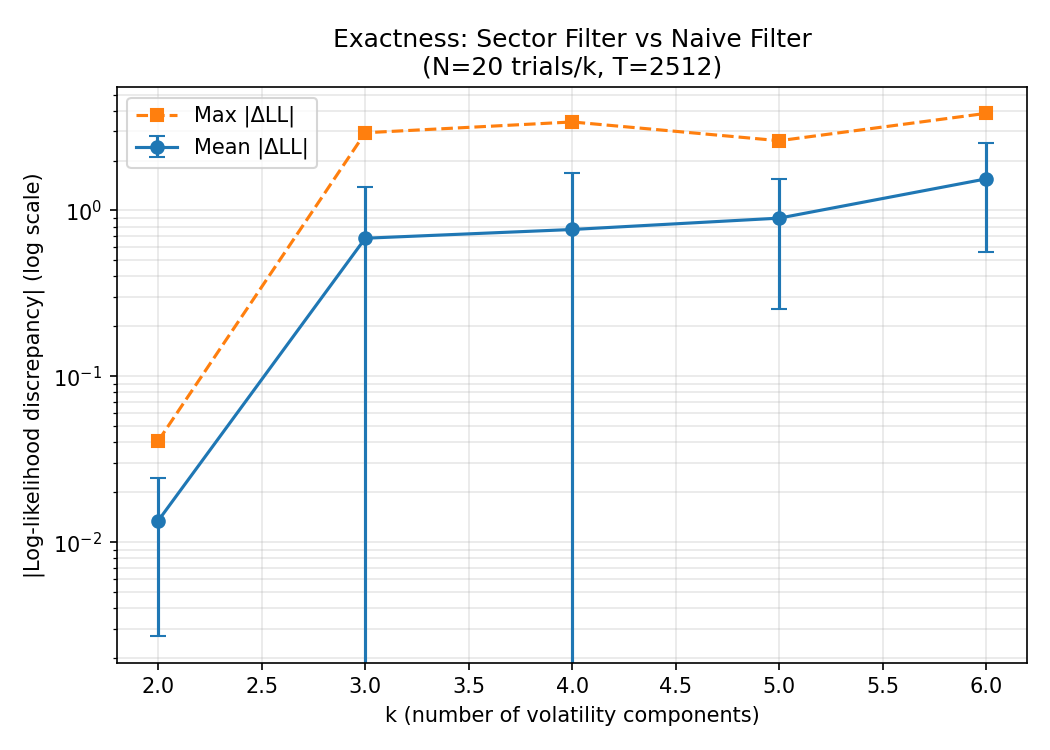}
    \caption{ $|\Delta \text{LL}| $ plotted against $k$, the number of multivariate volatility components}
    \label{fig:loglik_discrepancy}
\end{figure}
\begin{figure}[h!]
    \centering
    \includegraphics[width=0.8\textwidth]{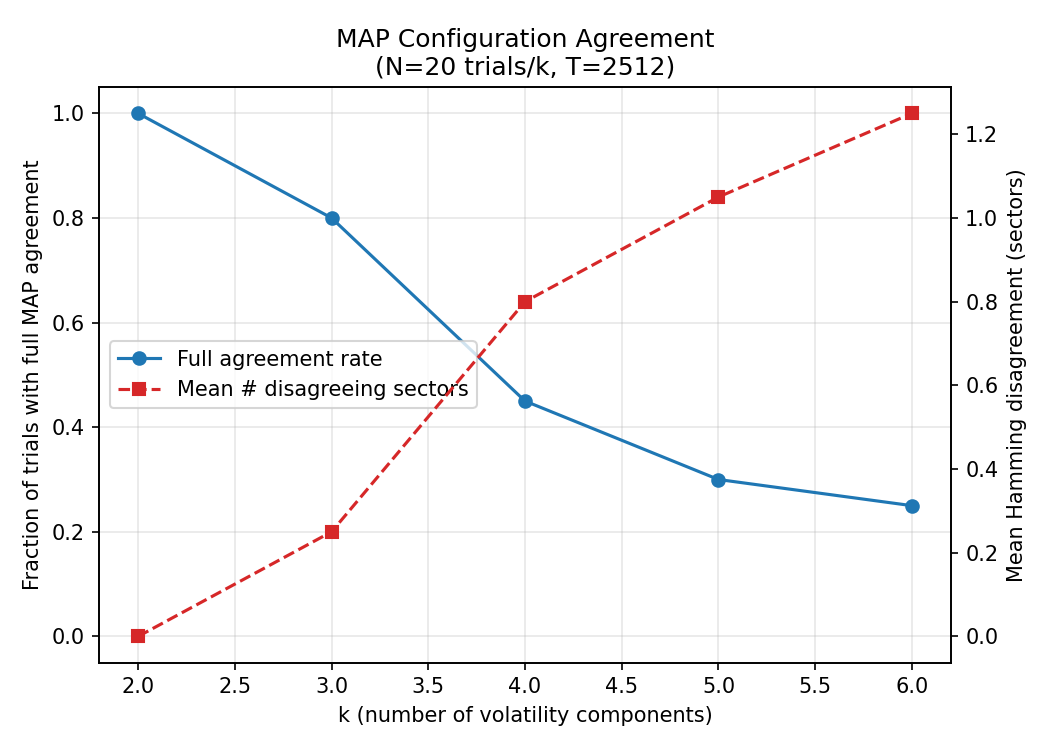}
    \caption{ MAP Agreement between Naive and Sector filters with corresponding Hamming disagreement}
    \label{fig:map_agreement}
\end{figure}
\begin{figure}[h!]
    \centering
    \includegraphics[width=0.8\textwidth]{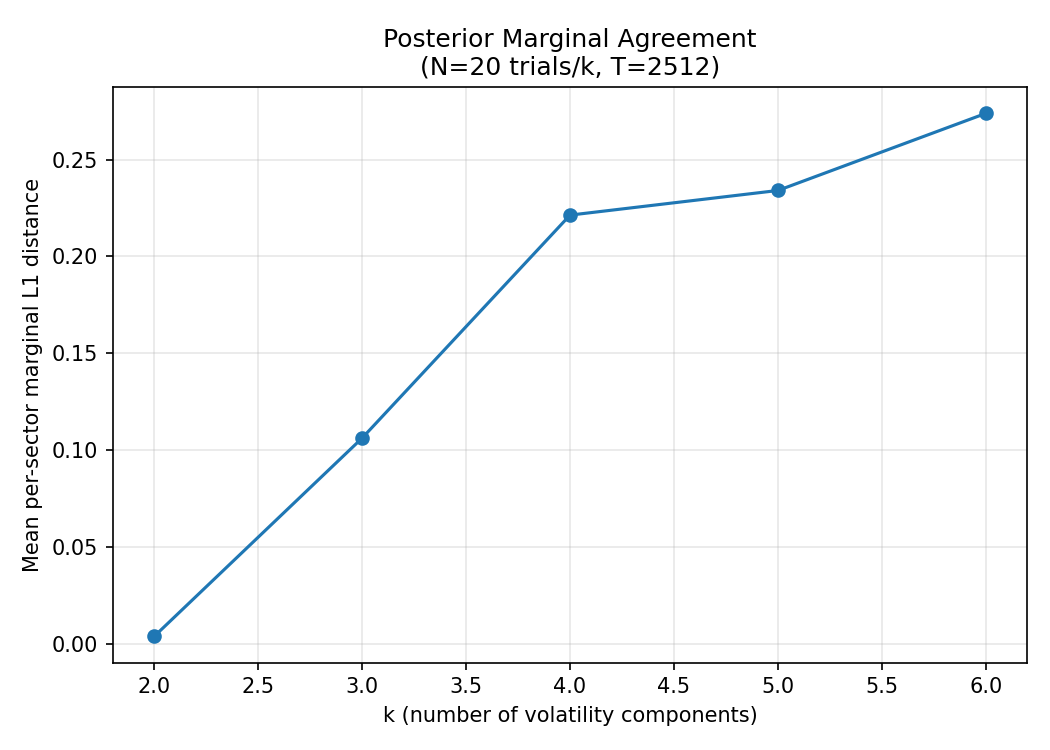}
    \caption{ $|\Delta \text{LL}| $ plotted against $k$, the number of multivariate volatility components}
    \label{fig:marginal_l1_distance}
\end{figure}

\begin{figure}[h!]
    \centering
    \includegraphics[width=0.8\textwidth]{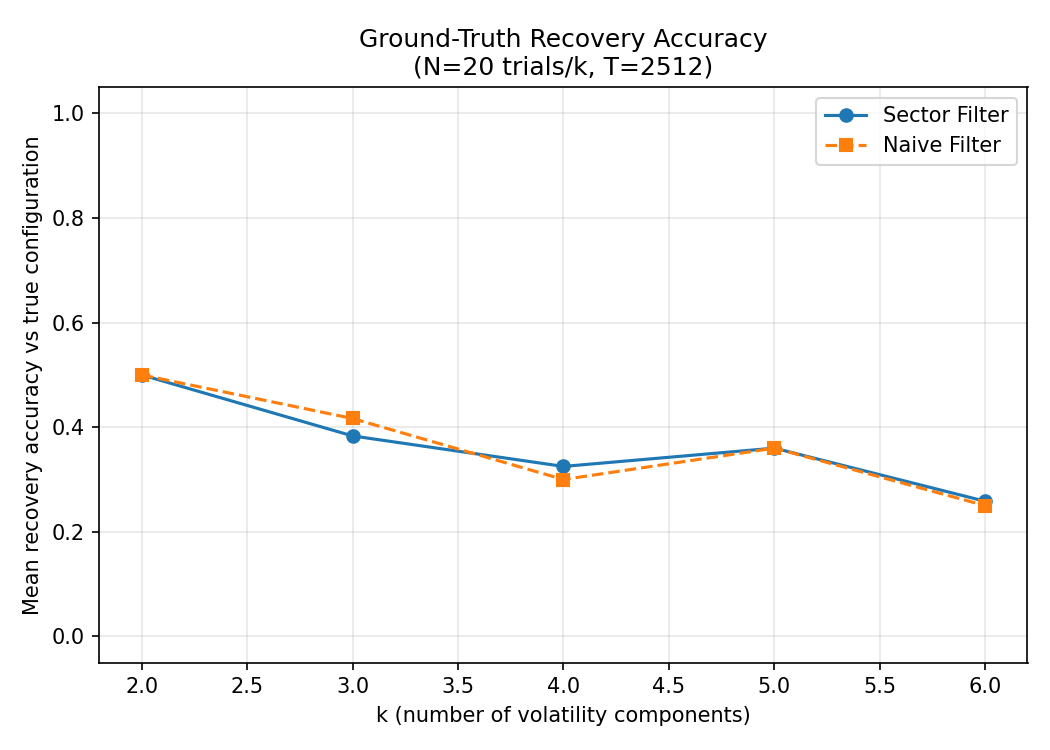}
    \caption{ Mean number of correct components as a fraction of the ground truth of the Synthetic MMSM process}
    \label{fig:lrecovery_accuracy}
\end{figure}
Finally we can see that the mean recovery accuracy of the two filters are directly comparable in the tractable regime of the naive filter.
\subsection{Exactness Benchmark}
\label{sec:exactness-benchmark}
To evaluate the accuracy of the sector filter, we compare its outputs against those of the full-state Hidden Markov Model filter on synthetic datasets generated from the same model. For each value of ($k\in{2,3,4,5,6}$), twenty independent simulations were performed and both filters were run on identical observations.

The comparison uses five metrics: (i) the log-likelihood difference,
\begin{equation}
   |  \mathrm{LL}_{\mathrm{Sector}} - \mathrm{LL}_{\mathrm{Naive}}|
\end{equation}

(ii) the relative log-likelihood difference, (iii) the mean ($L_1$) distance between recovered component marginals, (iv) the Hamming distance between maximum-a-posteriori (MAP) state estimates, and (v) agreement rates for sector-level state recovery.

Table 4 summarizes the results. For (k=2), the sector filter is numerically indistinguishable from the full-state filter, with relative log-likelihood differences on the order of ($10^{-6}$), negligible marginal discrepancies, and perfect MAP agreement across all trials. As (k) increases, deviations emerge. The mean ($L_1$) error grows gradually from approximately (0.005) at (k=2) to approximately (0.29) at (k=6), while the relative log-likelihood discrepancy remains below ($5\times10^{-4}$) in all experiments.

Despite these differences, the inferred states remain highly consistent between the two filters. MAP agreement is perfect for (k=2) but drops off for larger values of (k), with disagreements occurring primarily in trajectories where a large and increasing number of hidden-state configurations possess nearly identical posterior probability. 

We can attribute this to the inherent multiplicity of the sector-filter posterior distributions, where multiple configuration marginals recovered from the lifted representation have similar posterior likelihoods, a conclusion partially supported by the close agreement in recovery accuracy between the two filters. Additional concerns for the rising disagreement may arise from the accumulation of gradual numerical floating point precision errors in the computation of $Q_{(n, \ell)}$ involving division by $\tilde{p}_{\ell, D}$, though a full numerical stability study is not carried out here.

These results indicate that the sector filter preserves the dominant probabilistic structure of the full-state model while operating on a substantially reduced state space. Even for the largest benchmark considered, discrepancies in likelihood and state recovery remain small relative to the scale of the filtering problem.

\subsection{Identifiability Benchmark}
\label{sec:identifiability}

The discrepancies reported in Sections~\ref{sec:comparative-benchmark} and~\ref{sec:exactness-benchmark} grow with $k$, which raises the question of whether they reflect genuine divergence between the sector and naive filters, or an artefact of comparing labelled full-state configurations under a likelihood that is, by construction (Proposition~2), invariant under $S_k$. Two configurations $x, y \in \Sigma$ lying in the same orbit --- i.e.\ $y = \sigma \cdot x$ for some $\sigma \in S_k$ --- are assigned identical likelihood and hence identical posterior mass by both filters; any metric that penalises disagreement between such configurations is measuring a labelling artefact rather than a filtering error. We therefore introduce an orbit-invariant analogue of each metric used in Sections~\ref{sec:comparative-benchmark}--\ref{sec:exactness-benchmark} and re-run the comparative benchmark under it.

\paragraph{Orbit-invariant distance.} Define
\begin{equation}
d_{\mathrm{orbit}}(x,y) := \min_{\sigma\in S_k} d_H(x,\sigma\cdot y),
\label{eq:orbit-dist-def}
\end{equation}
the Hamming distance between $x$ and $y$ minimised over all relabellings of $y$. This has a closed form in terms of the occupation vectors $n(x), n(y) \in \mathbb{Z}_{\geq 0}^D$ introduced in Section~4, avoiding any search over $S_k$:

\begin{lemma}
For $x, y \in \Sigma$ with occupation vectors $n(x), n(y)$,
\begin{equation}
d_{\mathrm{orbit}}(x,y) = \frac{1}{2}\lVert n(x)-n(y)\rVert_1.
\label{eq:orbit-dist-closed-form}
\end{equation}
\end{lemma}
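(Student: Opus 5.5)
The plan is to prove the two inequalities $d_{\mathrm{orbit}}(x,y) \geq \tfrac12\lVert n(x)-n(y)\rVert_1$ and $d_{\mathrm{orbit}}(x,y) \leq \tfrac12\lVert n(x)-n(y)\rVert_1$ separately. Throughout, $x,y\in\Sigma$ are words of length $k$ over the alphabet $\{m_1,\dots,m_D\}$, and $n_j(x)=|\{\ell : x_\ell = m_j\}|$. Two preliminary facts will be used:
\begin{itemize}
\item Permuting $y$ does not change its occupation vector, $n(\sigma\cdot y)=n(y)$. This is the $\mathcal{S}_k$-invariance already used in Section~4.
\item Since $\sum_j n_j(x)=\sum_j n_j(y)=k$, we have $\sum_j (n_j(x)-n_j(y))^+ = \sum_j (n_j(x)-n_j(y))^- = \tfrac12\lVert n(x)-n(y)\rVert_1$.
\end{itemize}

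\textbf{Lower bound.} Fix $\sigma$ and set $z=\sigma\cdot y$. The positions where $x_\ell=z_\ell$ form the agreement set $A$, and for each level $j$ the number of agreements at level $j$ is at most $\min(n_j(x),n_j(z))=\min(n_j(x),n_j(y))$. Hence
\[
|A|\leq \sum_j \min(n_j(x),n_j(y)) = k-\sum_j (n_j(x)-n_j(y))^+ .
\]
Since $d_H(x,z)=k-|A|$, this gives $d_H(x,z)\geq \tfrac12\lVert n(x)-n(y)\rVert_1$. As the bound holds for every $\sigma$, it holds for the minimum.

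\textbf{Upper bound.} I construct an explicit $\sigma$ attaining the bound. For each level $j$, choose $c_j=\min(n_j(x),n_j(y))$ positions of $x$ carrying $m_j$, and $c_j$ positions of $y$ carrying $m_j$, and match them bijectively. The matched positions use $\sum_j c_j$ indices on each side, so the unmatched index sets of $x$ and $y$ have the same cardinality and can be paired by an arbitrary bijection. Together these pieces define a bijection $\pi$ of $\{1,\dots,k\}$; take $\sigma$ so that $(\sigma\cdot y)_\ell = y_{\pi(\ell)}$. Then $x$ and $\sigma\cdot y$ agree on at least $\sum_j c_j$ positions, so
\[
d_H(x,\sigma\cdot y)\leq k-\sum_j c_j=\tfrac12\lVert n(x)-n(y)\rVert_1 .
\]
Combining the two bounds gives the equality.

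The only delicate step is the bookkeeping in the upper-bound construction: checking that the partial level-wise matchings together with the leftover pairing form a genuine permutation, and stating the convention for how $\sigma$ acts on indices. Neither is a real obstacle. The identity $\sum_j\min(a_j,b_j)=k-\tfrac12\lVert a-b\rVert_1$, which follows from $\min(a,b)=\tfrac12(a+b-|a-b|)$ and $\sum_j a_j=\sum_j b_j=k$, is what links both bounds to the closed form.
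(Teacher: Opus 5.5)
Your proof is correct and follows essentially the same route as the paper. Both arguments bound the number of agreements at each level by $\min(n_j(x),n_j(y))$, attain that bound by greedy level-wise matching plus an arbitrary bijection on the leftover positions, and convert to the closed form via $\sum_j \min(a_j,b_j) = k - \tfrac12\lVert a-b\rVert_1$; your version just separates the two inequalities more explicitly than the paper does.
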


\begin{proof}
A relabelling $\sigma$ can match at most $\min(n_j(x), n_j(y))$ of the components in state $j$ between the two configurations, for each $j = 1,\dots,D$. Summing over $j$ gives the maximum achievable number of agreeing positions, so
\[
k - d_{\mathrm{orbit}}(x,y) = \sum_{j=1}^{D} \min\big(n_j(x), n_j(y)\big).
\]
Using the identity $\sum_j \min(a_j,b_j) = \tfrac{1}{2}\big(\sum_j a_j + \sum_j b_j - \sum_j |a_j - b_j|\big)$, together with $\sum_j n_j(x) = \sum_j n_j(y) = k$, gives Eq.~\eqref{eq:orbit-dist-closed-form}. The bound is achievable: greedily pairing the $\min(n_j(x),n_j(y))$ overlapping components in each state $j$, then assigning an arbitrary bijection between the remaining unmatched components, realises a permutation $\sigma$ attaining it.
\end{proof}

\paragraph{Orbit-invariant MAP agreement.} In place of the raw Hamming distance and MAP agreement indicator reported in Table~4 and Figure~5, we report
\begin{equation}
d_{\mathrm{orbit}}\big(x^*_{\mathrm{sector}}, x^*_{\mathrm{naive}}\big) = \frac{1}{2}\big\lVert n(x^*_{\mathrm{sector}}) - n(x^*_{\mathrm{naive}})\big\rVert_1,
\label{eq:orbit-map}
\end{equation}
where $x^*_{\mathrm{sector}}$ and $x^*_{\mathrm{naive}}$ are the MAP full-state configurations recovered by each filter, together with the orbit-level MAP agreement indicator $\mathbf{1}\big[n(x^*_{\mathrm{sector}}) = n(x^*_{\mathrm{naive}})\big]$.

\paragraph{Orbit-invariant recovery accuracy.} In place of the per-component correctness fraction plotted in Figure~7, we report
\begin{equation}
\mathrm{Acc}_{\mathrm{orbit}}(\hat x) := \frac{k - d_{\mathrm{orbit}}(x_{\mathrm{true}}, \hat x)}{k},
\label{eq:orbit-acc}
\end{equation}
the best achievable fraction of components correctly recovered under an optimal relabelling of the estimate $\hat x$ against the ground-truth configuration $x_{\mathrm{true}}$, computed separately for $\hat x = x^*_{\mathrm{sector}}$ and $\hat x = x^*_{\mathrm{naive}}$.

\paragraph{Orbit-level posterior agreement.} The marginal $L_1$ metric in Table~4 compares per-component marginals $p_{\ell j,t}$, which are themselves only identifiable up to relabelling and so inherit the same artefact. We replace it with a direct comparison of the sector posterior $q_t(n)$, computed identically by both filters since it is by definition the pushforward of the full posterior along the orbit map $\pi:\Sigma\to\Sigma/S_k,\ x\mapsto n(x)$:
\begin{equation}
\mathrm{TV}\big(q^{\mathrm{sector}}_t, q^{\mathrm{naive}}_t\big) = \frac{1}{2}\sum_{n} \left| q^{\mathrm{sector}}_t(n) - q^{\mathrm{naive}}_t(n)\right|,
\label{eq:orbit-tv}
\end{equation}
where $q^{\mathrm{naive}}_t(n) := \sum_{x:\,\pi(x)=n} p_{\mathrm{naive}}(x\mid y_{1:t})$ is obtained by marginalising the naive filter's full-state posterior onto occupation vectors. Because this compares two distributions on the same orbit space $\Sigma/S_k$ rather than labelled point estimates, it is orbit-invariant by construction and needs no correction of its own.

\paragraph{Benchmark protocol.} We repeat the comparative benchmark of Section~\ref{sec:comparative-benchmark} under identical settings ($N=20$ trials per $k \in \{2,3,4,5,6\}$, $T=2512$, multiplier values $m=\{0.7,0.9,1.0,1.1,1.3\}$, seed $=42$), replacing each of the three metrics above with its orbit-invariant counterpart from Eqs.~\eqref{eq:orbit-map}--\eqref{eq:orbit-tv}. Table~\ref{tab:identifiability} reports the results.

\begin{table}[h]
\centering
\begin{tabular}{c c c c c c c}
\hline
$k$ & Mean $d_{\mathrm{orbit}}$ & Std $d_{\mathrm{orbit}}$ & Orbit MAP Agree & Mean TV$(q^{\mathrm{sec}},q^{\mathrm{nai}})$ & Sector $\mathrm{Acc}_{\mathrm{orbit}}$ & Naive $\mathrm{Acc}_{\mathrm{orbit}}$ \\
\hline
2 & 0.2000 & 0.4104 & 0.80 & 0.0024 & 0.6500 & 0.6750 \\
3 & 0.6000 & 0.5026 & 0.40 & 0.0745 & 0.6500 & 0.5333 \\
4 & 1.5000 & 0.5130 & 0.00 & 0.1639 & 0.6750 & 0.5500 \\
5 & 1.5000 & 0.5130 & 0.00 & 0.1688 & 0.6900 & 0.5600 \\
6 & 2.5500 & 0.6048 & 0.00 & 0.2753 & 0.6917 & 0.4583 \\
\hline
\end{tabular}
\caption{Identifiability Benchmark and Comparison Summary per $k$ (Aggregated over 20 Trials). Compare against Table~4.}
\label{tab:identifiability}
\end{table}

The full orbit-invariant benchmark data output can be seen in Appendix B.

We consider now that these final set of results seem to throw the trend observed in the cases of the configuration posterior level Exactness and Comparative benchmarks into a sharper relief. The sector filter frequently seems to equal or outperform the naive filter regarding ground-truth recovery of volatility component configuration, yet they disagree significantly on structure. The major source of discrepancy between the two filtering procedures arises primarily in the MAP agreement between the two, which largely arises from the inherent identifiability differences between the two methodologies.
MAP agreement should not be used as the primary measure of filter performance here because it measures estimator agreement rather than recovery of the latent state. This distinction is particularly important for the MSM considered here, whose microscopic state space possesses permutation symmetry. Distinct microscopic configurations may therefore correspond to the same orbit and are indistinguishable at the sector level. Moreover, the MAP microscopic configuration need not belong to the orbit carrying the greatest aggregate posterior probability. It is therefore better to evaluate recovery against the known generating state, both at the microscopic and orbit-invariant levels.
\section{Conclusions and Outlook}

In this paper we formulated a reduced filtering procedure for the discrete multivariate Markov-switching multifractal (MSM) model by exploiting two structural properties of the model: the permutation symmetry of its likelihood and the factorization of its transition kernel. This permits the labelled $D^k$ configuration space to be replaced by an occupation-vector representation, producing substantial reductions in the computational complexity of MSM filtering. The resulting construction suggests a broader principle: when the likelihood is invariant under a group action and the transition operator respects the corresponding symmetry, filtering may be formulated directly on the associated orbit or sector space rather than on the redundant microscopic state space.

Several questions remain open. In particular, the final benchmarking results motivate a more careful distinction between filter agreement, distributional agreement, and ground-truth recovery. The MAP agreement criterion used in parts of this study is insufficient as a standalone measure of filter performance. Agreement of MAP configurations measures agreement between two estimators rather than correctness with respect to the latent state, and is particularly problematic in the present setting because permutation-related configurations are intentionally identified by the sector construction. In general,
\begin{equation}
q\!\left(\arg\max_x \pi(x)\right)
\neq
\arg\max_s \sum_{x:q(x)=s} \pi(x),
\end{equation}
so microscopic MAP agreement need not correspond even to agreement at the level of the most probable orbit.

The ground-truth recovery experiments provide a complementary and potentially more informative observation. Across the benchmarked regimes, the sector filter frequently exhibits higher recovery accuracy than the naive filter, both for labelled configurations and for orbit-invariant recovery. This occurs even in cases where the MAP configurations selected by the two filters disagree. Consequently, disagreement between the two posterior representations cannot by itself be interpreted as evidence that the sector filter is performing worse. Instead, the observed recovery advantage raises a further question: \emph{why should explicitly enforcing the permutation-invariant combinatorial structure of the MSM improve latent-state recovery?}

One possible explanation is that the sector representation acts as an implicit regularisation of the inference problem. The naive filter distinguishes between
\begin{equation}
\frac{k!}{n_1!\cdots n_D!}
\end{equation}
labelled configurations belonging to the same occupation sector, despite the likelihood being invariant under permutations of those labels. The sector filter instead aggregates these configurations into their minimal combinatorial representation before inference. This may suppress permutation-level posterior fragmentation, reduce variance associated with redundant microscopic states, and provide a more stable estimate of the collective volatility variable to which the observation process is sensitive. Other explanations, including the geometry of the MSM posterior and the possibility that the sector representation acts as a form of symmetry-induced regularisation, remain possible. Distinguishing between these mechanisms requires further controlled experiments.

The benchmarking protocol itself can therefore be strengthened in several directions. Future studies should separately report microscopic recovery accuracy, orbit-invariant recovery accuracy, distributional agreement between the sector posterior and the orbit projection of the naive posterior, and computational complexity. In particular, the quantity
\begin{equation}
D_{\mathrm{TV}}
\left(
q_{\#}\pi_t^{\mathrm{naive}},
\pi_t^{\mathrm{sector}}
\right)
\end{equation}
provides a direct measure of agreement on the quotient state space, while paired estimates of
\begin{equation}
\Delta A
=
A_{\mathrm{sector}}-A_{\mathrm{naive}}
\end{equation}
provide a direct measure of relative ground-truth recovery. Repeated matched simulations and uncertainty estimates for these quantities would allow the apparent recovery advantage observed here to be tested systematically.

A second direction concerns generalisations to interacting populations. If $n$ populations of sizes $k_1,\ldots,k_n$ are each internally exchangeable, the permutation symmetry generalises naturally to
\begin{equation}
G =
S_{k_1}\times S_{k_2}\times\cdots\times S_{k_n}.
\end{equation}
The corresponding state representation would consist of one occupation vector per population, with a sector-space cardinality of
\begin{equation}
\prod_{a=1}^{n}
\binom{k_a+D-1}{D-1}.
\end{equation}
This provides a natural extension of the present reduction to models containing multiple interacting but internally exchangeable populations. The more difficult case occurs when interactions break exchangeability, in which case the appropriate residual symmetry and corresponding orbit representation must first be identified.

Finally, a more speculative direction is the extension from discrete permutation symmetries to continuous symmetry groups in continuum formulations of MSM-like models. Rather than treating $S_k$ as literally becoming $U(k)$ or $SU(k)$, one may ask whether a continuum latent multiplier model possesses a continuous group $G$ acting on its state manifold $M$. The discrete occupation-space construction would then have the analogue
\begin{equation}
M \longrightarrow M/G,
\end{equation}
with the group orbits replacing discrete permutation orbits. For a Lie group, the corresponding finite-dimensional Lie algebra of infinitesimal generators provides a natural language for characterising these orbits and their invariants. Whether such a continuous symmetry reduction can produce a computationally finite filtering procedure, and under what conditions, remains an open problem. Nevertheless, it suggests that the symmetry-based reduction developed here may represent a special case of a broader approach to filtering on quotient state spaces.

Overall, the present results establish the computational value of exploiting permutation structure while also raising a more subtle question concerning the relationship between symmetry reduction, posterior geometry, and latent-state recovery. Resolving that question, together with establishing rigorous conditions for exact quotient filtering and extending the construction to more general symmetry groups, provides a natural programme for future work.

\printbibliography
\newpage
\appendix
\section{Appendix A: Raw Benchmark Data}
\small
\begin{longtable}{rcccccccccc}
\caption{Complete Raw Per-Trial Benchmark Results ($T=2512$ timesteps, 20 trials per $k$)} \label{tab:benchmark_raw_data} \\
\toprule
$k$ & Trial & $\text{LL}_{\text{Sector}}$ & $\text{LL}_{\text{Naive}}$ & $|\Delta \text{LL}|$ & Rel $\Delta \text{LL}$ & Mean $L_1$ & Hamming & MAP Agree & Sector Hit & Naive Hit \\
\midrule
\endfirsthead

\multicolumn{11}{c}{{\bfseries \tablename\ \thetable{} -- continued from previous page}} \\
\toprule
$k$ & Trial & $\text{LL}_{\text{Sector}}$ & $\text{LL}_{\text{Naive}}$ & $|\Delta \text{LL}|$ & Rel $\Delta \text{LL}$ & Mean $L_1$ & Hamming & MAP Agree & Sector Hit & Naive Hit \\
\midrule
\endhead

\midrule 
\multicolumn{11}{r}{{Continued on next page}} \\ 
\bottomrule
\endfoot

\bottomrule
\endlastfoot

2 & 0 & 7901.2214 & 7901.2222 & 0.000784 & 9.92e-08 & 0.004908 & 0 & True & 1.000 & 1.000 \\
2 & 1 & 7862.4658 & 7862.4786 & 0.012799 & 1.63e-06 & 0.000729 & 0 & True & 0.500 & 0.500 \\
2 & 2 & 8405.6599 & 8405.6557 & 0.004143 & 4.93e-07 & 0.000197 & 0 & True & 0.500 & 0.500 \\
2 & 3 & 7885.0070 & 7885.0063 & 0.000689 & 8.74e-08 & 0.006969 & 0 & True & 0.000 & 0.000 \\
2 & 4 & 7786.0802 & 7786.0852 & 0.005017 & 6.44e-07 & 0.007758 & 0 & True & 0.500 & 0.500 \\
2 & 5 & 7990.7497 & 7990.7632 & 0.013457 & 1.68e-06 & 0.003665 & 0 & True & 0.000 & 0.000 \\
2 & 6 & 7822.4323 & 7822.4439 & 0.011587 & 1.48e-06 & 0.003956 & 0 & True & 1.000 & 1.000 \\
2 & 7 & 8194.4233 & 8194.4071 & 0.016253 & 1.98e-06 & 0.000151 & 0 & True & 0.500 & 0.500 \\
2 & 8 & 7576.1810 & 7576.1609 & 0.020073 & 2.65e-06 & 0.000205 & 0 & True & 1.000 & 1.000 \\
2 & 9 & 7618.9205 & 7618.9282 & 0.007680 & 1.01e-06 & 0.001012 & 0 & True & 1.000 & 1.000 \\
2 & 10 & 7919.1248 & 7919.1417 & 0.016866 & 2.13e-06 & 0.004052 & 0 & True & 0.000 & 0.000 \\
2 & 11 & 7847.6628 & 7847.6891 & 0.026315 & 3.35e-06 & 0.003984 & 0 & True & 0.500 & 0.500 \\
2 & 12 & 7927.8184 & 7927.8427 & 0.024319 & 3.07e-06 & 0.002875 & 0 & True & 0.500 & 0.500 \\
2 & 13 & 7928.3242 & 7928.3541 & 0.029863 & 3.77e-06 & 0.002958 & 0 & True & 0.500 & 0.500 \\
2 & 14 & 7730.0142 & 7730.0552 & 0.040941 & 5.30e-06 & 0.011360 & 0 & True & 0.500 & 0.500 \\
2 & 15 & 8056.2415 & 8056.2625 & 0.020993 & 2.61e-06 & 0.000632 & 0 & True & 0.500 & 0.500 \\
2 & 16 & 7864.0416 & 7864.0456 & 0.004033 & 5.13e-07 & 0.011340 & 0 & True & 0.500 & 0.500 \\
2 & 17 & 8140.4042 & 8140.4075 & 0.003299 & 4.05e-07 & 0.001691 & 0 & True & 0.000 & 0.000 \\
2 & 18 & 7961.4284 & 7961.4357 & 0.007328 & 9.20e-07 & 0.005166 & 0 & True & 0.500 & 0.500 \\
2 & 19 & 8043.9142 & 8043.9182 & 0.003975 & 4.94e-07 & 0.001881 & 0 & True & 0.000 & 0.000 \\
\midrule
3 & 0 & 7962.3813 & 7963.0906 & 0.709325 & 8.91e-05 & 0.093120 & 0 & True & 0.333 & 0.333 \\
3 & 1 & 7887.6713 & 7888.1969 & 0.525624 & 6.66e-05 & 0.071850 & 0 & True & 0.333 & 0.333 \\
3 & 2 & 8459.7744 & 8459.9575 & 0.183060 & 2.16e-05 & 0.038596 & 0 & True & 0.667 & 0.667 \\
3 & 3 & 7979.6267 & 7980.2078 & 0.581121 & 7.28e-05 & 0.093181 & 0 & True & 0.333 & 0.333 \\
3 & 4 & 7880.8931 & 7881.6508 & 0.757657 & 9.61e-05 & 0.111816 & 0 & True & 0.333 & 0.333 \\
3 & 5 & 8031.5794 & 8031.7645 & 0.185055 & 2.30e-05 & 0.033621 & 0 & True & 0.333 & 0.333 \\
3 & 6 & 7868.0485 & 7868.6186 & 0.570146 & 7.25e-05 & 0.094565 & 0 & True & 0.333 & 0.333 \\
3 & 7 & 8282.8805 & 8283.4735 & 0.592982 & 7.16e-05 & 0.084705 & 0 & True & 0.333 & 0.333 \\
3 & 8 & 7682.2039 & 7685.1449 & 2.940995 & 3.83e-04 & 0.311523 & 1 & False & 0.667 & 0.333 \\
3 & 9 & 7685.0084 & 7685.1558 & 0.147441 & 1.92e-05 & 0.037130 & 0 & True & 0.667 & 0.667 \\
3 & 10 & 7984.7766 & 7985.3409 & 0.564243 & 7.07e-05 & 0.091176 & 0 & True & 0.000 & 0.000 \\
3 & 11 & 7887.8961 & 7889.0306 & 1.134542 & 1.44e-04 & 0.158300 & 1 & False & 0.333 & 0.667 \\
3 & 12 & 7979.7431 & 7980.2520 & 0.508931 & 6.38e-05 & 0.081878 & 0 & True & 0.333 & 0.333 \\
3 & 13 & 7983.8596 & 7984.8143 & 0.954660 & 1.20e-04 & 0.138402 & 0 & True & 0.333 & 0.333 \\
3 & 14 & 7772.3394 & 7773.0640 & 0.724623 & 9.32e-05 & 0.103986 & 0 & True & 0.333 & 0.333 \\
3 & 15 & 8107.0398 & 8107.3916 & 0.351795 & 4.34e-05 & 0.055811 & 0 & True & 0.333 & 0.333 \\
3 & 16 & 7949.1993 & 7950.4131 & 1.213824 & 1.53e-04 & 0.170669 & 1 & False & 0.667 & 0.667 \\
3 & 17 & 8206.1824 & 8206.2753 & 0.092898 & 1.13e-05 & 0.021617 & 0 & True & 0.000 & 0.000 \\
3 & 18 & 8011.0857 & 8011.6917 & 0.606018 & 7.56e-05 & 0.088825 & 0 & True & 0.667 & 0.667 \\
3 & 19 & 8089.4719 & 8089.9472 & 0.475283 & 5.87e-05 & 0.078601 & 1 & False & 0.000 & 0.667 \\
\midrule
4 & 0 & 7986.7291 & 7987.5457 & 0.816601 & 1.02e-04 & 0.169727 & 0 & True & 0.250 & 0.250 \\
4 & 1 & 7922.3610 & 7923.3618 & 1.000780 & 1.26e-04 & 0.198335 & 1 & False & 0.250 & 0.250 \\
4 & 2 & 8489.1760 & 8489.6588 & 0.482810 & 5.69e-05 & 0.117178 & 0 & True & 0.500 & 0.500 \\
4 & 3 & 8012.7297 & 8013.6277 & 0.897972 & 1.12e-04 & 0.180429 & 1 & False & 0.250 & 0.250 \\
4 & 4 & 7919.6469 & 7920.6120 & 0.965042 & 1.22e-04 & 0.191286 & 1 & False & 0.250 & 0.250 \\
4 & 5 & 8056.4026 & 8057.2514 & 0.848777 & 1.05e-04 & 0.181822 & 1 & False & 0.250 & 0.000 \\
4 & 6 & 7894.2255 & 7895.0450 & 0.819485 & 1.04e-04 & 0.176435 & 0 & True & 0.250 & 0.250 \\
4 & 7 & 8314.9789 & 8315.8239 & 0.844962 & 1.02e-04 & 0.170660 & 0 & True & 0.250 & 0.250 \\
4 & 8 & 7731.3323 & 7734.7487 & 3.416342 & 4.42e-04 & 0.658529 & 2 & False & 0.500 & 0.500 \\
4 & 9 & 7720.6274 & 7720.9169 & 0.289456 & 3.75e-05 & 0.082729 & 0 & True & 0.500 & 0.500 \\
4 & 10 & 8007.4102 & 8008.2045 & 0.794354 & 9.92e-05 & 0.165180 & 0 & True & 0.000 & 0.000 \\
4 & 11 & 7914.5160 & 7915.0069 & 0.490890 & 6.20e-05 & 0.113063 & 1 & False & 0.250 & 0.000 \\
4 & 12 & 8007.8282 & 8008.6291 & 0.800912 & 1.00e-04 & 0.168913 & 0 & True & 0.250 & 0.250 \\
4 & 13 & 8011.6669 & 8011.8315 & 0.164532 & 2.05e-05 & 0.051915 & 0 & True & 0.250 & 0.250 \\
4 & 14 & 7799.3878 & 7800.2743 & 0.886561 & 1.14e-04 & 0.181827 & 1 & False & 0.250 & 0.250 \\
4 & 15 & 8137.6696 & 8138.2709 & 0.601323 & 7.39e-05 & 0.133284 & 1 & False & 0.250 & 0.250 \\
4 & 16 & 7977.1706 & 7977.5841 & 0.413524 & 5.18e-05 & 0.098495 & 1 & False & 0.750 & 0.500 \\
4 & 17 & 8234.3312 & 8234.6934 & 0.362242 & 4.40e-05 & 0.089853 & 0 & True & 0.000 & 0.000 \\
4 & 18 & 8043.9056 & 8044.5772 & 0.671626 & 8.35e-05 & 0.147137 & 2 & False & 0.500 & 0.500 \\
4 & 19 & 8118.8249 & 8119.5475 & 0.722699 & 8.90e-05 & 0.151952 & 1 & False & 0.000 & 0.250 \\
\midrule
5 & 0 & 8000.3478 & 8001.3256 & 0.977803 & 1.22e-04 & 0.243577 & 1 & False & 0.200 & 0.200 \\
5 & 1 & 7943.0807 & 7944.0202 & 0.939525 & 1.18e-04 & 0.233076 & 1 & False & 0.200 & 0.200 \\
5 & 2 & 8506.7578 & 8507.7289 & 0.971097 & 1.14e-04 & 0.234327 & 1 & False & 0.400 & 0.400 \\
5 & 3 & 8031.5791 & 8032.5539 & 0.974728 & 1.21e-04 & 0.239339 & 1 & False & 0.200 & 0.200 \\
5 & 4 & 7942.0620 & 7943.0645 & 1.002534 & 1.26e-04 & 0.248698 & 1 & False & 0.200 & 0.200 \\
5 & 5 & 8072.1818 & 8073.0722 & 0.890325 & 1.10e-04 & 0.222384 & 1 & False & 0.200 & 0.200 \\
5 & 6 & 7909.8329 & 7910.7410 & 0.908070 & 1.15e-04 & 0.225642 & 1 & False & 0.200 & 0.200 \\
5 & 7 & 8332.2575 & 8333.1951 & 0.937611 & 1.13e-04 & 0.227448 & 1 & False & 0.200 & 0.200 \\
5 & 8 & 7762.6289 & 7765.2622 & 2.633303 & 3.39e-04 & 0.400787 & 2 & False & 0.800 & 0.800 \\
5 & 9 & 7745.0644 & 7745.3371 & 0.272718 & 3.52e-05 & 0.101166 & 0 & True & 0.400 & 0.400 \\
5 & 10 & 8019.2079 & 8020.1581 & 0.950186 & 1.18e-04 & 0.236610 & 1 & False & 0.000 & 0.000 \\
5 & 11 & 7929.5638 & 7930.5593 & 0.995471 & 1.26e-04 & 0.245864 & 0 & True & 0.200 & 0.200 \\
5 & 12 & 8022.0305 & 8022.9554 & 0.924843 & 1.15e-04 & 0.230756 & 0 & True & 0.200 & 0.200 \\
5 & 13 & 8025.2952 & 8026.0461 & 0.750875 & 9.36e-05 & 0.187807 & 1 & False & 0.200 & 0.200 \\
5 & 14 & 7813.9142 & 7814.7794 & 0.865181 & 1.11e-04 & 0.218520 & 1 & False & 0.200 & 0.200 \\
5 & 15 & 8153.2501 & 8154.2186 & 0.968536 & 1.19e-04 & 0.239327 & 1 & False & 0.200 & 0.200 \\
5 & 16 & 7990.2803 & 7991.0772 & 0.796918 & 9.97e-05 & 0.198906 & 0 & True & 0.800 & 0.800 \\
5 & 17 & 8251.5202 & 8252.4800 & 0.959828 & 1.16e-04 & 0.236166 & 0 & True & 0.000 & 0.000 \\
5 & 18 & 8060.0538 & 8060.2796 & 0.225883 & 2.80e-05 & 0.088657 & 0 & True & 0.800 & 0.800 \\
5 & 19 & 8134.4239 & 8135.3982 & 0.974246 & 1.20e-04 & 0.236712 & 1 & False & 0.000 & 0.000 \\
\midrule
6 & 0 & 8008.2166 & 8009.9142 & 1.697554 & 2.12e-04 & 0.287239 & 1 & False & 0.167 & 0.167 \\
6 & 1 & 7954.0204 & 7955.7483 & 1.727967 & 2.17e-04 & 0.291775 & 2 & False & 0.167 & 0.167 \\
6 & 2 & 8515.2238 & 8516.9634 & 1.739556 & 2.04e-04 & 0.290807 & 1 & False & 0.333 & 0.333 \\
6 & 3 & 8041.5249 & 8043.2384 & 1.713437 & 2.13e-04 & 0.289196 & 2 & False & 0.167 & 0.167 \\
6 & 4 & 7954.5492 & 7956.2899 & 1.740700 & 2.19e-04 & 0.292520 & 1 & False & 0.167 & 0.167 \\
6 & 5 & 8081.4287 & 8083.1091 & 1.680414 & 2.08e-04 & 0.284768 & 1 & False & 0.167 & 0.167 \\
6 & 6 & 7919.1246 & 7920.7601 & 1.635555 & 2.06e-04 & 0.278385 & 1 & False & 0.167 & 0.167 \\
6 & 7 & 8341.3444 & 8343.0768 & 1.732363 & 2.08e-04 & 0.290533 & 1 & False & 0.167 & 0.167 \\
6 & 8 & 7780.0505 & 7783.9000 & 3.849454 & 4.94e-04 & 0.513606 & 2 & False & 0.667 & 0.667 \\
6 & 9 & 7758.1722 & 7758.4619 & 0.289658 & 3.73e-05 & 0.106511 & 0 & True & 0.333 & 0.333 \\
6 & 10 & 8025.2673 & 8026.9806 & 1.713292 & 2.13e-04 & 0.288591 & 1 & False & 0.000 & 0.000 \\
6 & 11 & 7937.6258 & 7939.3601 & 1.734233 & 2.18e-04 & 0.290941 & 1 & False & 0.167 & 0.167 \\
6 & 12 & 8029.2818 & 8031.0097 & 1.727883 & 2.15e-04 & 0.290740 & 1 & False & 0.167 & 0.167 \\
6 & 13 & 8032.5516 & 8034.2505 & 1.698886 & 2.11e-04 & 0.286380 & 1 & False & 0.167 & 0.167 \\
6 & 14 & 7821.8490 & 7823.5042 & 1.655203 & 2.12e-04 & 0.280963 & 2 & False & 0.167 & 0.167 \\
6 & 15 & 8161.4286 & 8163.1592 & 1.730598 & 2.12e-04 & 0.290538 & 1 & False & 0.167 & 0.167 \\
6 & 16 & 7996.1751 & 7997.7766 & 1.601449 & 2.00e-04 & 0.272186 & 0 & True & 0.667 & 0.500 \\
6 & 17 & 8260.1017 & 8261.8398 & 1.738092 & 2.10e-04 & 0.290642 & 0 & True & 0.000 & 0.000 \\
6 & 18 & 8067.8967 & 8068.3242 & 0.427497 & 5.30e-05 & 0.126487 & 0 & True & 0.667 & 0.667 \\
6 & 19 & 8142.1331 & 8143.8672 & 1.734151 & 2.13e-04 & 0.290278 & 2 & False & 0.000 & 0.000 \\

\end{longtable}
\newpage
\section{Appendix B: Raw Orbit Invariant Benchmark Data}
\begin{longtable}{r r S[table-format=1.1] c S[table-format=1.6] S[table-format=1.4] S[table-format=1.4]}
\caption{Complete Per-Trial Identifiability Benchmark Data ($T=2512$ days).} \label{tab:identifiability_raw} \\
\toprule
{$k$} & {Trial} & {$d_{\text{orbit}}$} & {MAP Agree} & {$\text{TV}(q^{\text{sec}}, q^{\text{naive}})$} & {Sector Acc$_{\text{orbit}}$} & {Naive Acc$_{\text{orbit}}$} \\
\midrule
\endfirsthead

\multicolumn{7}{c}{\tablename\ \thetable\ -- \textit{Continued from previous page}} \\
\toprule
{$k$} & {Trial} & {$d_{\text{orbit}}$} & {MAP Agree} & {$\text{TV}(q^{\text{sec}}, q^{\text{naive}})$} & {Sector Acc$_{\text{orbit}}$} & {Naive Acc$_{\text{orbit}}$} \\
\midrule
\endhead

\midrule
\multicolumn{7}{r}{\textit{Continued on next page}} \\
\endfoot

\bottomrule
\endlastfoot

2 & 42 & 0.0 & True & 0.001106 & 0.5000 & 0.5000 \\
2 & 43 & 0.0 & True & 0.002717 & 0.5000 & 0.5000 \\
2 & 44 & 0.0 & True & 0.004971 & 0.5000 & 0.5000 \\
2 & 45 & 0.0 & True & 0.000668 & 0.5000 & 0.5000 \\
2 & 46 & 1.0 & False & 0.006275 & 0.5000 & 0.5000 \\
2 & 47 & 0.0 & True & 0.001226 & 1.0000 & 1.0000 \\
2 & 48 & 0.0 & True & 0.004666 & 1.0000 & 1.0000 \\
2 & 49 & 1.0 & False & 0.002240 & 0.5000 & 1.0000 \\
2 & 50 & 1.0 & False & 0.002322 & 0.5000 & 0.0000 \\
2 & 51 & 0.0 & True & 0.001805 & 1.0000 & 1.0000 \\
2 & 52 & 0.0 & True & 0.002636 & 1.0000 & 1.0000 \\
2 & 53 & 0.0 & True & 0.001254 & 1.0000 & 1.0000 \\
2 & 54 & 0.0 & True & 0.001264 & 0.5000 & 0.5000 \\
2 & 55 & 0.0 & True & 0.001105 & 0.5000 & 0.5000 \\
2 & 56 & 0.0 & True & 0.003466 & 0.5000 & 0.5000 \\
2 & 57 & 0.0 & True & 0.003407 & 0.5000 & 0.5000 \\
2 & 58 & 0.0 & True & 0.002382 & 1.0000 & 1.0000 \\
2 & 59 & 0.0 & True & 0.001976 & 1.0000 & 1.0000 \\
2 & 60 & 1.0 & False & 0.001709 & 0.5000 & 1.0000 \\
2 & 61 & 0.0 & True & 0.001540 & 0.0000 & 0.0000 \\
\midrule
3 & 42 & 0.0 & True & 0.012793 & 1.0000 & 1.0000 \\
3 & 43 & 0.0 & True & 0.055119 & 0.6667 & 0.6667 \\
3 & 44 & 0.0 & True & 0.113357 & 0.6667 & 0.6667 \\
3 & 45 & 1.0 & False & 0.054890 & 0.6667 & 1.0000 \\
3 & 46 & 0.0 & True & 0.004024 & 0.3333 & 0.3333 \\
3 & 47 & 1.0 & False & 0.037047 & 0.6667 & 0.3333 \\
3 & 48 & 0.0 & True & 0.082350 & 0.3333 & 0.3333 \\
3 & 49 & 1.0 & False & 0.104094 & 0.6667 & 0.3333 \\
3 & 50 & 1.0 & False & 0.070053 & 0.6667 & 0.6667 \\
3 & 51 & 0.0 & True & 0.160554 & 0.6667 & 0.6667 \\
3 & 52 & 1.0 & False & 0.079185 & 1.0000 & 0.6667 \\
3 & 53 & 0.0 & True & 0.090150 & 0.3333 & 0.3333 \\
3 & 54 & 1.0 & False & 0.006135 & 0.3333 & 0.3333 \\
3 & 55 & 1.0 & False & 0.163592 & 0.6667 & 0.6667 \\
3 & 56 & 1.0 & False & 0.167792 & 0.6667 & 0.3333 \\
3 & 57 & 0.0 & True & 0.062208 & 0.3333 & 0.3333 \\
3 & 58 & 1.0 & False & 0.010464 & 1.0000 & 0.6667 \\
3 & 59 & 1.0 & False & 0.031673 & 0.6667 & 0.3333 \\
3 & 60 & 1.0 & False & 0.108394 & 0.6667 & 0.3333 \\
3 & 61 & 1.0 & False & 0.075671 & 1.0000 & 0.6667 \\
\midrule
4 & 42 & 2.0 & False & 0.141726 & 0.7500 & 0.7500 \\
4 & 43 & 1.0 & False & 0.167447 & 0.7500 & 0.5000 \\
4 & 44 & 1.0 & False & 0.099549 & 0.7500 & 0.5000 \\
4 & 45 & 1.0 & False & 0.074008 & 1.0000 & 0.7500 \\
4 & 46 & 1.0 & False & 0.088050 & 0.7500 & 0.7500 \\
4 & 47 & 2.0 & False & 0.109098 & 0.7500 & 0.5000 \\
4 & 48 & 2.0 & False & 0.518354 & 0.7500 & 0.7500 \\
4 & 49 & 2.0 & False & 0.148137 & 0.5000 & 0.5000 \\
4 & 50 & 1.0 & False & 0.232516 & 0.7500 & 0.5000 \\
4 & 51 & 2.0 & False & 0.187232 & 1.0000 & 0.5000 \\
4 & 52 & 2.0 & False & 0.138037 & 0.7500 & 0.5000 \\
4 & 53 & 1.0 & False & 0.054878 & 0.5000 & 0.7500 \\
4 & 54 & 2.0 & False & 0.162020 & 0.7500 & 0.2500 \\
4 & 55 & 1.0 & False & 0.190450 & 0.7500 & 0.7500 \\
4 & 56 & 1.0 & False & 0.173169 & 0.5000 & 0.2500 \\
4 & 57 & 1.0 & False & 0.143505 & 0.2500 & 0.2500 \\
4 & 58 & 2.0 & False & 0.163400 & 0.2500 & 0.2500 \\
4 & 59 & 2.0 & False & 0.160045 & 0.7500 & 0.5000 \\
4 & 60 & 1.0 & False & 0.169686 & 0.7500 & 1.0000 \\
4 & 61 & 2.0 & False & 0.155823 & 0.5000 & 0.5000 \\
\midrule
5 & 42 & 1.0 & False & 0.180192 & 0.8000 & 0.6000 \\
5 & 43 & 2.0 & False & 0.247252 & 0.8000 & 0.4000 \\
5 & 44 & 2.0 & False & 0.183006 & 0.6000 & 0.4000 \\
5 & 45 & 1.0 & False & 0.218315 & 0.4000 & 0.2000 \\
5 & 46 & 1.0 & False & 0.202874 & 0.4000 & 0.4000 \\
5 & 47 & 2.0 & False & 0.211652 & 0.6000 & 0.2000 \\
5 & 48 & 1.0 & False & 0.088575 & 0.6000 & 0.8000 \\
5 & 49 & 1.0 & False & 0.200917 & 0.6000 & 0.4000 \\
5 & 50 & 1.0 & False & 0.127233 & 0.6000 & 0.6000 \\
5 & 51 & 2.0 & False & 0.205053 & 0.8000 & 0.6000 \\
5 & 52 & 2.0 & False & 0.198185 & 0.8000 & 0.4000 \\
5 & 53 & 1.0 & False & 0.073736 & 0.8000 & 1.0000 \\
5 & 54 & 2.0 & False & 0.060503 & 0.8000 & 0.6000 \\
5 & 55 & 2.0 & False & 0.114969 & 0.8000 & 0.6000 \\
5 & 56 & 1.0 & False & 0.160835 & 0.8000 & 0.8000 \\
5 & 57 & 1.0 & False & 0.178490 & 0.8000 & 0.8000 \\
5 & 58 & 1.0 & False & 0.102568 & 0.6000 & 0.8000 \\
5 & 59 & 2.0 & False & 0.221395 & 0.8000 & 0.6000 \\
5 & 60 & 2.0 & False & 0.230520 & 0.8000 & 0.4000 \\
5 & 61 & 2.0 & False & 0.168755 & 0.6000 & 0.6000 \\
\midrule
6 & 42 & 2.0 & False & 0.230282 & 0.3333 & 0.5000 \\
6 & 43 & 2.0 & False & 0.311139 & 0.6667 & 0.5000 \\
6 & 44 & 3.0 & False & 0.610028 & 0.6667 & 0.5000 \\
6 & 45 & 3.0 & False & 0.275672 & 0.6667 & 0.1667 \\
6 & 46 & 2.0 & False & 0.303203 & 0.5000 & 0.3333 \\
6 & 47 & 2.0 & False & 0.274975 & 0.6667 & 0.3333 \\
6 & 48 & 3.0 & False & 0.239771 & 0.6667 & 0.1667 \\
6 & 49 & 3.0 & False & 0.175228 & 0.5000 & 0.5000 \\
6 & 50 & 3.0 & False & 0.228720 & 0.6667 & 0.3333 \\
6 & 51 & 3.0 & False & 0.226944 & 0.6667 & 0.1667 \\
6 & 52 & 1.0 & False & 0.153043 & 0.8333 & 0.8333 \\
6 & 53 & 3.0 & False & 0.381418 & 1.0000 & 0.5000 \\
6 & 54 & 3.0 & False & 0.177104 & 0.6667 & 0.5000 \\
6 & 55 & 3.0 & False & 0.249660 & 0.6667 & 0.3333 \\
6 & 56 & 2.0 & False & 0.094545 & 0.6667 & 0.5000 \\
6 & 57 & 3.0 & False & 0.373449 & 1.0000 & 0.5000 \\
6 & 58 & 2.0 & False & 0.160996 & 0.8333 & 0.6667 \\
6 & 59 & 2.0 & False & 0.163688 & 0.8333 & 0.5000 \\
6 & 60 & 3.0 & False & 0.276190 & 0.8333 & 0.6667 \\
6 & 61 & 3.0 & False & 0.599117 & 0.5000 & 0.6667 \\
\end{longtable}
\end{document}